\newtheorem{definition}{Definition}
\newtheorem{theorem}{Theorem}
\newtheorem{lemma}{Lemma}
\newtheorem{claim}{Claim}
\newtheorem{corollary}{Corollary}
\newcommand{\beq}{\begin{equation}}
\newcommand{\eeq}{\end{equation}}
\newcommand{\bea}{\begin{eqnarray}}
\newcommand{\eea}{\end{eqnarray}}
\newcommand{\ba}{\begin{array}}
\newcommand{\ea}{\end{array}}
\newcommand{\beba}{\begin{equation}\begin{array}{lll}}
\newcommand{\eeea}{\end{array}\end{equation}}
\newcommand{\bc}{\begin{cases}}
\newcommand{\ec}{\end{cases}}
\newcommand{\bpm}{\begin{pmatrix}}
\newcommand{\epm}{\end{pmatrix}}
\newcommand{\bit}{\begin{itemize}}
\newcommand{\eit}{\end{itemize}}
\newcommand{\ben}{\begin{enumerate}}
\newcommand{\een}{\end{enumerate}}
\newcommand{\R}{\mathbb{R}}
\newcommand{\N}{\mathbb{N}}
\newcommand{\deq}{\triangleq}
\newcommand{\lb}{\left(}
\newcommand{\rb}{\right)}
\newcommand{\lr}[1]{\lb #1 \rb}
\renewcommand{\sl}{\sum\limits}
\renewcommand{\span}[1]{\mathrm{span} \lb \{ #1  \} \rb}
\renewcommand{\ker}[1]{\mathrm{Ker} \lb #1 \rb}
\renewcommand{\Col}[1]{\mathcal{C} \lb #1 \rb}
\newcommand{\mmax}[1]{\mathrm{max} \lb #1 \rb}
\newcommand{\poly}[1]{\mathrm{poly}\lb #1 \rb}
\newcommand{\polylog}[1]{\mathrm{polylog}\lb #1 \rb}
\newcommand{\bgo}[1]{O \lb #1 \rb}
\newcommand{\omg}[1]{\Omega\lb #1 \rb}
\newcommand{\tht}[1]{\Theta\lb #1 \rb}
\newcommand{\tbgo}[1]{\tilde{O} \lb #1 \rb}
\newcommand{\tp}{\otimes}
\newcommand{\la}{\leftarrow}
\newcommand{\refl}[1]{\mathrm{Ref}_{#1}}
\newcommand{\ket}[1]{\left|#1\right\rangle}
\newcommand{\bra}[1]{\left\langle #1\right|}
\newcommand{\braket}[2]{\left\langle #1|#2 \right\rangle}
\newcommand{\ketbra}[2]{\left|#1\rangle \langle #2  \right |}
\newcommand{\zo}{\{0,1\}}
\newcommand{\zon}{\{0,1\}^n}
\newcommand{\D}{\mathcal{D}}
\renewcommand{\H}{\mathcal{H}}
\renewcommand{\P}{\mathcal{P}}
\newcommand{\Q}{\mathcal{Q}}
\newcommand{\ws}{\mathrm{wsize}}
\newcommand{\ecs}{\mathcal{E}}
\title{Span-program-based quantum algorithm for tree detection}
\author{Guoming Wang \\{Computer Science Division, University of California, Berkeley, U.S.A.} {\vspace{-5ex}} \thanks{Email: wgmcreate@berkeley.edu.}}
\date{}
\begin{document}

\maketitle

\begin{abstract}
Span program is a linear-algebraic model of computation originally proposed for studying the complexity theory. Recently, it has become a useful tool for designing quantum algorithms. In this paper, we present a time-efficient span-program-based quantum algorithm for the following problem. Let $T$ be an arbitrary tree. Given query access to the adjacency matrix of a graph $G$ with $n$ vertices, we need to determine whether $G$ contains $T$ as a subgraph, or $G$ does not contain $T$ as a minor, under the promise that one of these cases holds. We call this problem the subgraph/not-a-minor problem for $T$. We show that this problem can be solved by a bounded-error quantum algorithm with $O(n)$ query complexity and $\tilde{O}(n)$ time complexity. The query complexity is optimal, and the time complexity is tight up to polylog factors. 
\end{abstract}

\section{Introduction}
\label{sec:intro}

Given two graphs $G$ and $P$, it is natural to ask whether $G$ contains $P$ as a subgraph. This problem, known as the \emph{subgraph isomorphism} problem, has numerous applications in cheminformatics \cite{cheminf}, circuit design \cite{circuit} and software engineering \cite{se}. If $G$ and $P$ are both given as input, this problem is NP-complete, and hence it is unlikely to be solvable in polynomial time. However, if $P$ is fixed and only $G$ is given as input, then this problem, usually called the \emph{$P$-containment} problem, can be solved efficiently. Specifically, if $P$ contains $k$ vertices, then the $P$-containment problem can be solved in $\bgo{n^k}$ classical time, where $n$ is the number of vertices in $G$. In fact, by exploiting $P$'s structure cleverly, we can usually do much better. For example, if $P$ is a tree, then the $P$-containment problem can be solved in $O(n^2)$ classical time \cite{colorcoding}  (assuming $G$ is given by the $n \times n$ adjacency matrix). 

Recently, there has been rising interest in developing fast quantum algorithms for the subgraph containment problem. In particular, the problem of triangle finding has received the most attention, perhaps due to its simplicity and its application to boolean matrix multiplication. Magniez, Santha and Szegedy \cite{triangle} first gave two quantum algorithms for this problem, one based on Grover's search and the other based on quantum walks \cite{oldqw}. They achieved $\tbgo{n^{13/10}}$ quantum query complexity for this problem. Then, Belovs \cite{lg} used learning graphs to show that this problem has $\bgo{n^{35/27}}$ quantum query complexity. Subsequently, this result was improved to $\bgo{n^{9/7}}$, first by Lee, Magniez and Santha \cite{lgtriangle} who used learning graphs, second by Jeffery, Kothari and Magniez who used \emph{nested} quantum walks \cite{nqw}. 

There has also been work on quantum algorithms for detecting other patterns. Childs and Kothari \cite{minorclosed} studied the quantum query complexity of detecting paths, claws, cycles and bipartite patterns, etc. Later, Belovs and Reichardt \cite{spstconnect} showed that detecting paths and subdivided stars can be done in $O(n)$ queries and $\tbgo{n}$ time. Finally, there were also upper bounds on the quantum query complexity of detecting arbitrary patterns  \cite{lgsubgraph1,lgsubgraph2,nqw}. It is worth mentioning that most of the above algorithms are query-efficient but not time-efficient.

In this paper, we present a \emph{time-efficient} quantum algorithm for the following variant of tree containment problem. Let $T=(V_T,E_T)$ be an arbitrary tree. Given query access to the adjacency matrix of a graph $G$ with $n$ vertices, we need to determine whether $G$ contains $T$ as a subgraph, or $G$ does not contain $T$ as a \emph{minor}, under the promise that one of these cases holds. We call this problem the \emph{subgraph/not-a-minor} problem for $T$. We show that this problem can be solved by a bounded-error quantum algorithm with $\bgo{n}$ query complexity and $\tbgo{n}$ time complexity \footnote{We use the symbol $\tilde{O}$ to suppress polylog factors. Namely, $\tbgo{f(n)}=\bgo{f(n)\lr{\log{f(n)}}^{b}}$ for some constant $b \ge 0$.}. Formally,

\begin{theorem}
Let $T=(V_T,E_T)$ be an arbitrary tree. Then the subgraph/not-a-minor problem for $T$ can be solved by a bounded-error quantum algorithm with $O(n)$ query complexity and $\tilde{O}(n)$ time complexity.
\end{theorem}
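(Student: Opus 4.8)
The plan is to build a span program $P$ for the subgraph/not-a-minor problem whose witness size is $O(n)$ on all inputs satisfying the promise, and which can be evaluated time-efficiently; the theorem then follows from the standard correspondence between span programs of witness size $W$ and quantum algorithms with $O(\sqrt{W \cdot \ws^{(1)}})$-type query bounds, combined with the time-efficient implementation of span-program algorithms when the associated vectors are sparse and efficiently computable. First I would decompose the tree $T$ by rooting it and using the fact that a tree on $k$ vertices admits a recursive ``caterpillar-like'' or centroid-based decomposition into a small number of subpaths/subtrees; the key structural input is that embedding $T$ into $G$ can be organized so that each edge of $T$ is witnessed by a single edge of $G$, and the branching vertices of $T$ (of which there are $O(k)=O(1)$) are the only places where non-local consistency must be enforced.

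Next I would design the span program in layers. For each vertex $v \in V_T$ I introduce, for every vertex $x \in V(G)$, an ``input'' vector encoding the hypothesis ``$v$ is mapped to $x$'', and for each edge $\{u,v\} \in E_T$ and each pair $\{x,y\}$ with $A_G(x,y)=1$ a vector linking the hypotheses for $u$ and $v$; the target vector forces a globally consistent choice. The positive witness, when $G \supseteq T$, simply reads off an embedding and has squared length $O(n)$ (a constant number of layers, $O(1)$ nonzero entries per layer coming from the fixed image of each of the $O(1)$ vertices of $T$, scaled so norms are $O(1)$). The more delicate half is the negative witness: when $G$ has no $T$-minor, I must exhibit a dual solution of squared length $O(n)$. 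Here is where the ``not-a-minor'' relaxation (rather than merely ``not-a-subgraph'') does the real work: obstructions to having $T$ as a minor are robust and can be certified by a labeling of $V(G)$ — roughly, assigning to each $x \in V(G)$ a ``reason'' it cannot serve as the image of each $v \in V_T$ — and the minor-monotone nature of the obstruction lets these local certificates be combined consistently along the edges of $G$ without blow-up, giving the $O(n)$ bound. I would prove correctness of the negative witness by a counting/flow argument: if all the local certificates could be simultaneously satisfied, one could contract connected ``branch sets'' in $G$ to recover $T$ as a minor, contradicting the promise.

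The hard part, I expect, will be the negative-witness construction and its size analysis — making precise the certificate of ``no $T$-minor'' in a form that is (i) linear-algebraically compatible with the span program, (ii) of total squared length $O(n)$ rather than, say, $O(n^2)$ (which naively arises because $G$ has up to $\binom{n}{2}$ edges), and (iii) valid precisely under the promise. The resolution is to exploit that $T$ is a \emph{tree}: an obstruction to a tree minor has a particularly simple recursive form (an obstruction to embedding a subtree rooted at $v$ is built from obstructions for its children plus a ``not enough disjoint branches'' condition at $v$), so the certificate can be propagated from the leaves of $T$ inward, touching each vertex of $G$ a constant number of times. Finally, for the time complexity I would verify that all span-program vectors have $O(1)$ support, are indexed by $O(n)$-size sets, and that the reflections needed by the algorithm (the subspace $\mathcal{H}$ and the operator related to the available/false inputs) can be implemented in $\tilde{O}(1)$ time per step using the sparsity plus standard quantum-walk/phase-estimation primitives; the total is $\tilde{O}(\sqrt{W}) = \tilde{O}(\sqrt{n \cdot n}) = \tilde{O}(n)$ after accounting for the negative witness complexity, and the matching lower bound $\Omega(n)$ follows from a reduction from a known $\Omega(n)$ query bound (e.g., for path detection), so the query complexity is optimal.
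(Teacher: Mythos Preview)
Your proposal has the right overall shape but misses two concrete ingredients, and contains one genuine misconception about the witness-size accounting.

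First, the accounting. You aim for positive and negative witness sizes both $O(n)$, and you explicitly worry that the negative witness might naively be $O(n^2)$ because $G$ has $\binom{n}{2}$ edges. In fact the paper's bounds are positive witness $O(1)$ and negative witness $O(n^2)$; the overall witness size is the geometric mean $\sqrt{O(1)\cdot O(n^2)}=O(n)$. The positive side is $O(1)$ because an embedding of a constant-size tree uses a constant number of input vectors, each with $O(1)$ coefficient; your own parenthetical (``$O(1)$ nonzero entries per layer \dots norms are $O(1)$'') actually says this, so your claimed $O(n)$ there is a slip. More importantly, there is no need to push the negative witness down to $O(n)$, and the paper does not: the negative witness has inner product $O(1)$ with each of the $O(n^2)$ input vectors, giving $O(n^2)$, and that suffices.

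Second, two missing techniques. (i) \emph{Color coding}: before building the span program, the paper randomly colors $V_G$ by $V_T$ and discards edges whose endpoints are not colored by adjacent nodes of $T$. This is what makes your ``layers'' well-defined and what lets the span program be organized around the structure of $T$; without it, your ``hypothesis $v\mapsto x$'' vectors have no canonical layering and the construction you sketch does not obviously close up. (ii) The span program itself is not built from per-vertex hypothesis vectors but from \emph{parallel $st$-flows indexed by the leaves of $T$}: the ambient space is spanned by $\ket{u}\otimes\ket{f}$ for $u\in V_G\cup\{s,t\}$ and $f$ a leaf of $T$, the target is $(\ket{s}-\ket{t})\otimes\sum_f\ket{f}$, and an edge $(u,v)$ with $c(u)=x$, $c(v)=y\in C(x)$ contributes $(\ket{u}-\ket{v})\otimes\sum_{f\text{ leaf of }T^y}\ket{f}$. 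This ``direct sum of $st$-connectivity span programs, correlated through the leaf register'' is the mechanism that enforces global consistency at branching vertices; your caterpillar/centroid decomposition is not used.

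Third, the negative witness. Your intuition that the certificate for ``no $T$-minor'' propagates recursively from leaves to root is correct in spirit, but the paper makes this precise via a structural claim: if the colored graph $G_c$ has no $T$-minor, then for each internal node $a$ of $T$ there exist subsets $V_a\subseteq c^{-1}(a)$ and a partition $V_a=\bigsqcup_{b\in C(a)}V_{a,b}$ with $V_r=c^{-1}(r)$, such that edges of $G_c$ between $c^{-1}(a)$ and $c^{-1}(b)$ respect these sets in both directions. The negative witness is then an explicit vector built from these $V_{a,b}$, and its validity (orthogonality to all available input vectors) is checked case by case using those closure conditions. This claim is proved by induction on the depth of $T$ and is the technical heart of the negative case; your proposal gestures at it but does not identify the needed statement.

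Finally, the time analysis in the paper is not just ``vectors are sparse''; it requires padding the colored blocks to uniform size, factoring the span-program matrix as $V'=A^\dagger B$ with explicit isometries $A,B$, proving an $\Omega(1)$ spectral gap for $\mathrm{Ref}_B\,\mathrm{Ref}_A$ around $-1$, and implementing each reflection in $\mathrm{polylog}(n)$ time. Your sparsity remark does not cover the spectral-gap step.
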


Note that by a reduction from the unstructured search, one can show that the subgraph/not-a-minor problem for any $T$ has $\omg{n}$ quantum query complexity (see Proposition 4 of \cite{spstconnect}). Thus, the query complexity of our algorithm is optimal, and its time complexity is tight up to polylog factors. 

Our algorithm is developed based on \emph{span programs}. Span program \cite{sp} is a linear-algebraic model of computation originally proposed for the study of complexity theory. Informally speaking, a span program is composed of a \emph{target} vector and a collection of \emph{input} vectors from an inner-product space. It accepts an input if and only if the target vector lies in the span of the \emph{available} input vectors on this input. Its complexity is captured by a measure called \emph{witness size}. In a recent breakthrough, Reichardt \cite{spqqc1,spqqc2} proved that for any (partial) boolean function, the optimal witness size of span programs for this function is within a constant factor of the bounded-error quantum query complexity for the same function. His result leads to a novel approach to develop quantum algorithms based on span programs. To date, several quantum algorithms \cite{sprank,spstconnect,spgc,spgametree,spformula1,spformula2} have been designed in this way. In fact, learning graphs are a special class of span programs, and hence the learning-graph-based algorithms \cite{kdist1,kdist2,lg,lgtriangle,lgsubgraph1,lgsubgraph2} can be also viewed as span-program-based. Nevertheless, most of these algorithms are only query-efficient but not time-efficient. The efficient evaluation of span programs and learning graphs can be challenging.

Our work is closely related to the span program for undirected $st$-connectivity \cite{spstconnect}, which works as follows. In order to test whether $s$ and $t$ are connected in an undirected graph $G=(V,E)$, we build a span program with target vector $\ket{s}-\ket{t}$ and input vectors $\ket{u}-\ket{v}$ for any $u,v \in V$. The input vector $\ket{u}-\ket{v}$ is available if and only if $(u,v) \in E$. Then it is obvious that the target vector lies in the span of the available input vectors if and only if there is a path connecting $s$ and $t$ in $G$. In other words, the idea of this span program is that we try to run a single flow from $s$ to $t$ in $G$. 

We utilize the span program for undirected $st$-connectivity in the task of tree detection as follows. Suppose $T$ has root $r$ and leaves $f_1$, $f_2$, $\dots$, $f_k$. Then, in order to test whether $G$ contains $T$ as a subgraph, we verify that $G$ contains $k$ paths such that: (1) the $j$-th path resembles the path from $r$ to $f_j$ in $T$; (2) these $k$ paths overlap in certain way so that their union looks like $T$. To accomplish this, we use a technique called \emph{``parallel flows"}. Namely, we run $k$ flows in parallel such that the $j$-th flow corresponds to the path from $r$ to $f_j$, and we let these flows interfere somehow to ensure that they respect the correlations among the $k$ paths. Algebraically, our span program is the ``direct sum" of $k$ span programs for undirected $st$-connectivity, but these $k$ span-programs are also correlated somehow so that their solutions (i.e. the $st$-paths) overlap in certain way. This parallel-flow technique might be useful somewhere else (see Section \ref{sec:con}).

For analyzing the witness size of our span program, we prove a theorem (i.e. Claim \ref{clm:structure}) about the structure of graphs that do not contain a given tree as a minor, which might be of independent interest.

\section{Preliminaries}
\label{sec:prelim}

\subsection{Notation}

Let $[n]=\{1,2,\dots,n\}$. For any matrix $A$, let $\Col{A}$ be the column space of $A$, and let $\refl{A}$ be the reflection about $\Col{A}$. Furthermore, let $\ker{A}$ be the kernel of $A$.

For any tree $T=(V_T,E_T)$, let $V_{T,i}$ and $V_{T,l}$ be the set of internal nodes and leaves of $T$ respectively. Also, let $r$ be denote the root of $T$. For any $x \in V_T$, let $C(x)$ be the set of $x$'s children, and let $T^x$ be the subtree of $T$ rooted at $x$, and let $V^x_T$ be the set of nodes in $T^x$, and let $V^x_{T,i}$, $V^x_{T,l}$ be the set of internal and leaf nodes in $T^x$ respectively. Furthermore, we will add two special nodes $s$ and $t$, and set $V^s_{T,l}=V^t_{T,l}=V_{T,l}$. Then, for any $x,y \in V_T \cup \{s,t\}$, let $S_{x,y} = V^x_{T,l} \cap V^y_{T,l}$. We say $x,y \in V_T \cup \{s,t\}$ are \emph{adjacent} (denoted by $x \sim y$) if $(x,y)\in E_T$ or $\{x,y\}=\{s,t\}$ or $\{x,y\}=\{s,r\}$ or $\{x,y\}=\{t,f\}$ for some $f \in V_{T,l}$. Then for any $x \in V_T \cup \{s,t\}$ and $f \in V^x_{T,l}$, let $M(x,f)=\{y \in V_{T}:~x\sim y,~f\in S_{x,y}\}$. It is easy to check that $|M(x,f)|=2$.

For any graph $G=(V_G,E_G)$ and $U \subseteq V_G$, let $G|_{U} \deq (U,E_{G|_U})$ be the \emph{induced subgraph} of $G$ on the vertex set $U$. Namely, for any $u,v \in U$, $(u,v) \in E_{G|_U}$ if and only if $(u,v) \in E_G$.

\subsection{Graph theory}

For any two graphs $G=(V_G,E_G)$ and $H=(V_H,E_H)$, we say that $H$ is a \emph{minor} of $G$ if $H$ can be obtained from $G$ by deleting and contracting edges of $G$, and removing isolated vertices. Here, contracting an edge $(u,v)$ means replacing $u$ and $v$ by a new vertex and connecting this vertex to the original neighbors of $u$ and $v$. 

Suppose $\ecs$ is an arbitrary subset of $E_G$. Then we use $\ecs(G)$ to denote the graph obtained by contracting the edges in $\ecs$. Note that $\ecs(G)$ is well-defined, because the final graph is independent of the order of the edge contractions. Moreover, if the vertices $v_{1}, \dots, v_{k} \in V_G$ are combined together in $\ecs(G)$ (and no other vertex is combined with them), then we denote this new vertex as $w \deq \{v_{1},\dots,v_{k}\}$ and we say that $w$ {\emph {contains}} $v_{1}$, $\dots,$ $v_{k}$. Finally, for any $u \in V_G$ we say that $u$ is {\emph {involved}} in $\ecs$ if there exists $v \in V_G$ such that $(u,v) \in \ecs$.

\subsection{Span program and quantum query complexity}

Span program is a linear-algebraic model of computation defined as follows:

\begin{definition}[Span program \cite{sp}]
A span program $\P$ is a $6$-tuple $(n,d,\ket{\tau},\{\ket{v_j}: j \in [m] \},I_{{free}},\{I_{i,b}:i \in [n], b\in \zo\})$, where $\ket{\tau} \in \R^d$,  $\ket{v_j} \in \R^d$ for any $j \in [m]$, and $I_{free} \cup \lb \cup_{i=1}^n I_{i,x_i} \rb = I \deq [m]$. $\ket{\tau}$ is called is the {target} vector, and each $\ket{v_j}$ is called an input vector. For any $j \in I_{free}$, we say that $\ket{v_j}$ is a {free} input vector; for any $j \in I_{i,b}$ for some $i \in [n]$ and $b \in \zo$, we say that $\ket{v_j}$ is labeled by $(i,b)$.
 
To $\P$ corresponds a boolean function $f_{\P}:\zon \to \zo$ defined as follows: for $x=x_1\dots x_n \in \zon$,
\beba
f_{\P}(x)=
\bc
1, & ~~~\mathrm{if~}\tau \in \span{\ket{v_j}: j \in I_{free} \cup \lb \cup_{i=1}^n I_{i,x_i} \rb } , \\
0, & ~~~\mathrm{otherwise}.
\ec 
\eeea
Namely, on input $x$, only the $\ket{v_j}$'s with $j \in I_{free} \cup \lb \cup_{i=1}^n I_{i,x_i} \rb$ are {available}, and $f_P(x)=1$ if and only if the target vector lies in the span of the available input vectors.
\end{definition}
For convenience, we say that $\P$ \emph{accepts} or \emph{rejects} $x$ if $f_{\P}(x)=1$ or $0$, respectively.

The complexity of a span program is measured by its \emph{witness size} defined as follows:

\begin{definition}[Witness size \cite{spqqc1}]
Let $\P=(n,d,\ket{\tau},\{\ket{v_j}: j \in [m]\},I_{free},\{I_{i,b}:i \in [n], b\in \zo\})$ be a span program. Let $I=I_{free} \cup \lb \cup_{i=1}^n \cup_{b \in \zo} I_{i,b} \rb$ and let
$A=\sum_{j \in I} \ketbra{v_j}{j}$. Then, for any $x \in \zon$, let 
$I(x)=I_{free} \cup \lb \cup_{j=1}^n I_{j,x_j} \rb$, 
$\bar{I}(x)=\cup_{j=1}^n I_{j,x_j}.$
Then, let 
$\Pi(x)=\sum_{j \in I(x)} \ketbra{j}{j}$, 
$\bar{\Pi}(x)=\sum_{j \in \bar{I}(x)} \ketbra{j}{j}$.
The witness size of $\P$ on $x$, denoted by $\ws(\P,x)$, is defined as follows:
\bit
\item If $f_P(x)=1$, then $\ket{\tau} \in \Col{A(\Pi(x))}$, so there exists $\ket{w} \in \R^m$ satisfying $A \Pi(x) \ket{w} = \ket{\tau}$.
Any such $\ket{w}$ is a (positive) witness for $x$, and its size is defined as $\|\bar{\Pi}(x)\ket{w}\|^2$. Then $\ws(P,x)$ is defined as the minimal size among all such witnesses.
\item If $f_P(x)=0$, then $\ket{\tau} \not\in \Col{A(\Pi(x))}$, so there exists $\ket{w'} \in \R^d$ satisfying 
$\braket{\tau}{w'}=1$ and $\Pi(x)A^{\dagger}\ket{w'}=0$.
Any such $\ket{w'}$ is a (negative) witness for $x$, and its size is defined as $\|A^{\dagger}\ket{w'}\|^2$. Then $\ws(P,x)$ is defined as the minimal size among all such witnesses.
\eit
\end{definition}

For any $\D \subseteq \zon$ and $b \in \zo$, let
\beq
\ws_b(\P,\D)=\max\limits_{x\in \D: f_P(x)=b} \ws(P,x).
\eeq
Then the witness size of $\P$ over domain $\D$ is defined as
\beq
\ws(\P,\D)=\sqrt{\ws_0(P,\D)\ws_1(P,\D)}.
\eeq

Surprisingly, for any (partial) boolean function, the optimal witness size of span programs for this function is within a constant factor of the bounded-error quantum query complexity for the same function:

\begin{theorem}[\cite{spqqc1,spqqc2}]
For any function $f: \D \to \zo$ where $\D \subseteq \zon$, let $Q(f)$ be the bounded-error quantum query complexity of $f$. Then
\beq
Q(f) = \tht{\inf_{\P:{f_{\P}}_{|\D}=f} \ws(\P,\D)},
\eeq
where the infimum is over span programs $\P$ that compute a function agreeing with $f$ on $\D$. Moreover, this infimum is achieved.
\label{thm:wsqq}
\end{theorem}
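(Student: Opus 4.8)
The plan is to prove the two inequalities hidden in the $\Theta$ separately, using the \emph{general (negative-weight) adversary bound} $\mathrm{ADV}^{\pm}(f)$ as an intermediary. Recall that $\mathrm{ADV}^{\pm}(f)$ is the value of the semidefinite program that maximizes the spectral norm $\|\Gamma\|$ over real symmetric matrices $\Gamma$ indexed by $\D$, subject to $\Gamma[x,y]=0$ whenever $f(x)=f(y)$, and $\|\Gamma \circ \Delta_i\| \le 1$ for every coordinate $i$, where $\Delta_i[x,y]=1$ iff $x_i \neq y_i$ and $\circ$ is the entrywise product. I will establish: (i) $\min_{\P}\ws(\P,\D)=\mathrm{ADV}^{\pm}(f)$, with the minimum attained; (ii) $Q(f)=\omg{\mathrm{ADV}^{\pm}(f)}$; and (iii) $Q(f)=\bgo{\mathrm{ADV}^{\pm}(f)}$. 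Chaining (i)--(iii) yields $Q(f)=\tht{\min_{\P}\ws(\P,\D)}$ and exhibits, via (i), a span program realizing the infimum.

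For (i) I would first put an arbitrary span program computing $f$ into \emph{canonical form}: rescaling the target and input vectors lets one assume $\ws_0(\P,\D)=\ws_1(\P,\D)=\ws(\P,\D)$, and a change of basis lets one index the input vectors by triples (coordinate, value, row) with a fixed target. In this form $\ws(\P,\D)$ is the optimum of an SDP whose variable is the Gram matrix of the input vectors; taking its Lagrangian dual should reproduce exactly the adversary SDP above. The converse construction is the more substantive direction: from an optimal $\Gamma$, the spectral decompositions of the blocks of $\Gamma$ between the $f=1$ and $f=0$ parts of $\D$ give explicit input vectors and a target, and one then checks that the target lies in the available span for every $1$-input and extracts negative witnesses for every $0$-input from the orthogonal complement of the used eigenspaces. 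Because the adversary SDP attains its optimum, so does the resulting span program, giving the stated equality with the minimum achieved.

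Claim (ii) is the adversary lower bound (H{\o}yer--Lee--\v{S}palek); I would cite it, or reprove it by the standard progress argument: one attaches to the algorithm's states on different inputs a bilinear progress measure built from $\Gamma$, observes that it starts at $\|\Gamma\|$, that a bounded-error algorithm forces it down to $\bgo{\sqrt{\epsilon}\,\|\Gamma\|}$, and that each query changes it by at most $2\max_i\|\Gamma\circ\Delta_i\|\le 2$, so $\omg{\|\Gamma\|}$ queries are needed. Claim (iii) is where I expect the real work to lie. Writing $A=\sum_j \ketbra{v_j}{j}$ and augmenting $A$ by an extra column $\ket{\tau}$ scaled by a small weight $\alpha$, consider on input $x$ the unitary $U=R_2 R_1$, where $R_1$ is the reflection about $\span{\ket{j}:j\in I(x)}$ enlarged by the line carrying $\ket{\tau}$, and $R_2$ is the reflection about $\ker{\tilde A}$. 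The ``effective spectral gap lemma'' then shows that a positive witness of size $\ws_1$ on a $1$-input yields a vector of phase $0$ under $U$, while a negative witness of size $\ws_0$ on a $0$-input certifies that $U$ has no phase in an interval of width $\tht{1/(\alpha\sqrt{\ws_0\ws_1})}$ around $0$. Choosing $\alpha=\tht{1/\sqrt{\ws_0}}$ and running phase estimation on $U$ to precision $\tht{1/\sqrt{\ws_0\ws_1}}$, boosted by amplitude amplification, distinguishes the two cases using $\bgo{\sqrt{\ws_0\ws_1}}=\bgo{\ws(\P,\D)}$ applications of $R_1$; since only $R_1$ depends on $x$ and only through the availability of input vectors, each costs $\bgo{1}$ queries, so $Q(f)=\bgo{\ws(\P,\D)}$, and taking the span program from (i) gives $Q(f)=\bgo{\mathrm{ADV}^{\pm}(f)}$.

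I expect the difficulty to concentrate in (iii): proving the effective spectral gap lemma and calibrating $\alpha$ together with the phase-estimation precision so that the algorithm is simultaneously sound (no false accepts coming from the $f(x)=0$ side) and has exactly the claimed $\bgo{\ws(\P,\D)}$ query cost. A secondary subtlety is the SDP-duality step in (i), where one must verify strong duality and attainment of both optima so that the ``$\inf$'' in the statement is genuinely a minimum.
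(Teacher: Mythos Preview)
The paper does not prove this theorem; it is quoted as a known result with citations to \cite{spqqc1,spqqc2} (Reichardt), and no proof appears anywhere in the text. Your proposal is in fact a faithful outline of the argument in those references: the SDP duality between witness size and the general adversary bound $\mathrm{ADV}^{\pm}(f)$ is the content of \cite{spqqc1}, the lower bound $Q(f)=\Omega(\mathrm{ADV}^{\pm}(f))$ is the H{\o}yer--Lee--\v{S}palek adversary method you cite, and the upper bound via phase estimation on $U=R_{\Lambda}R_x$ together with the effective spectral gap lemma is exactly \cite{spqqc2}. So there is nothing to compare against in this paper, and your sketch is consistent with the original proofs.
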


In particular, a span program with small witness size can be converted into a  quantum algorithm with small query complexity:
\begin{corollary}
For any function $f: \D \to \zo$ where $\D \subseteq \zon$, if $\P$ is a span program computing a function agreeing with $f$ on $\D$, then there exists a bounded-error quantum algorithm that evaluates $f$ with $O(\ws(\P,\D))$ queries.
\label{cor:wsqqc}
\end{corollary}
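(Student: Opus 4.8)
The plan is to deduce the corollary immediately from Theorem~\ref{thm:wsqq}. Since $f_{\P}$ restricted to $\D$ equals $f$, the span program $\P$ is an admissible choice in the infimum appearing in Theorem~\ref{thm:wsqq}; hence $\inf_{\P':{f_{\P'}}_{|\D}=f}\ws(\P',\D)\le\ws(\P,\D)$. Plugging this into the upper-bound half of Theorem~\ref{thm:wsqq}, namely $Q(f)=O\!\lr{\inf_{\P'}\ws(\P',\D)}$, we get $Q(f)=O(\ws(\P,\D))$. Because $Q(f)$ is by definition (within a constant factor) the query count of a bounded-error quantum algorithm computing $f$, this already produces the desired algorithm.

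If one wants the statement to be constructive rather than a black-box citation, I would recall that the upper-bound direction of Theorem~\ref{thm:wsqq} is itself \emph{algorithmic}: from $\P$ one builds a unitary that is a product of two reflections determined by the target vector $\ket{\tau}$ and the input vectors $\{\ket{v_j}\}$, using one query to decide, on input $x$, which input vectors are available; phase estimation of this unitary then detects whether $\ket{\tau}$ lies in the span of the available vectors. The standard spectral-gap and effective-spectral-gap analysis, applied respectively to the positive witness on a $1$-input and the negative witness on a $0$-input, shows that $O(\sqrt{\ws_0(\P,\D)\,\ws_1(\P,\D)})=O(\ws(\P,\D))$ rounds suffice for bounded error, each round costing $O(1)$ queries. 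Running this compiler on the given $\P$ yields the claimed algorithm explicitly.

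I do not expect a real obstacle: all the substance is in Theorem~\ref{thm:wsqq}, and the corollary only observes that a fixed span program is one particular point of the infimum while the relevant direction of that theorem comes with an explicit algorithm. The only minor care needed is with constants --- the $\Theta$ in Theorem~\ref{thm:wsqq} hides a universal constant, so ``$O(\ws(\P,\D))$ queries'' is meant with that constant absorbed --- together with the routine observation that the algorithm may be given an a priori bound on $\ws(\P,\D)$, which is harmless since $\P$ is fixed.
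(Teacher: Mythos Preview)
Your proposal is correct and matches the paper's treatment: the paper states Corollary~\ref{cor:wsqqc} immediately after Theorem~\ref{thm:wsqq} with no separate proof, treating it as an obvious consequence, and your first paragraph spells out precisely that deduction (the given $\P$ is one point of the infimum, so the upper-bound half of Theorem~\ref{thm:wsqq} yields $Q(f)=O(\ws(\P,\D))$). Your additional remarks on the constructive/algorithmic content are accurate and go a bit beyond what the paper says here, but they are consistent with the later discussion in Section~4.
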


\section{Span program for tree detection}
\label{sec:sp}

In this section, we build a span program for the subgraph/not-a-minor problem for any tree. This span program has witness size $\bgo{n}$. Then it follows from  Corollary \ref{cor:wsqqc} that this problem has $\bgo{n}$ quantum query complexity.

\begin{theorem}
Let $T=(V_T,E_T)$ be an arbitrary tree. Then there exists a bounded-error quantum algorithm for the subgraph/not-a-minor problem for $T$ with $\bgo{n}$ query complexity.
\label{thm:query}
\end{theorem}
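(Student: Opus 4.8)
The plan is to exhibit an explicit span program $\P$ for which ${f_{\P}}_{|\D}$ equals the indicator of ``$G$ contains $T$ as a subgraph'' on the promise domain $\D$, and then to bound its witness size by $\bgo{n}$; the theorem follows immediately from Corollary~\ref{cor:wsqqc}. The program realizes the parallel-flow idea from Section~\ref{sec:intro}. I would adjoin the two auxiliary vertices $s,t$ of Section~\ref{sec:prelim} and work in $\R^{(V_G\cup\{s,t\})\times V_{T,l}}$, whose second factor indexes the leaves of $T$ — one ``flow channel'' per root-to-leaf path. The target vector is $\ket{\tau}=(\ket{s}-\ket{t})\tp\sum_{f\in V_{T,l}}\ket{f}$. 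For every unordered pair $\{u,v\}$ of vertices of $G$ and every adjacent pair $x\sim y$ in $V_T\cup\{s,t\}$ we include an input vector proportional to $(\ket{u}-\ket{v})\tp\sum_{f\in S_{x,y}}\ket{f}$, labelled so that it is available exactly when $\{u,v\}\in E_G$, together with free input vectors realizing the tree edges incident to $s$ and $t$ (which may attach to any vertex of $G$). Restricted to channel $f$, the available ``tree positions'' are exactly the edges along the $s$-to-$t$ path through $r$ and $f$, so a channelwise positive witness is a unit $s$-$t$ flow using $G$-edges in those positions; since the $G$-edge coefficient in a tree position $(x,y)$ is common to all channels $f\in S_{x,y}$, a joint witness forces the $k$ flows to coincide on their shared prefixes, i.e.\ to glue into a copy of $T$ inside $G$. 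Algebraically this is a coupled direct sum of $k$ span programs for undirected $st$-connectivity.

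With this construction, $\P$ accepts $G$ if and only if $G$ contains a subdivision of $T$, i.e.\ $T$ as a topological minor. On $\D$ this is exactly what we need: a subgraph copy of $T$ is in particular a subdivision of $T$, while if $G$ has no $T$-minor it has neither a $T$-subdivision nor a $T$-subgraph; hence ${f_{\P}}_{|\D}$ is the indicator of subgraph containment. For the positive case, let $\varphi$ be an embedding of $T$ into $G$; route the $f$-th flow along the $\varphi$-image of the $r$-to-$f$ path (closed up through the free $s$- and $t$-edges) and observe that because these images overlap precisely where the corresponding paths of $T$ do, the channelwise flows agree in every tree position and assemble into one witness $\ket{w}$. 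Since $T$ is fixed, each flow has constant size, so $\|\bar{\Pi}(x)\ket{w}\|^2=\bgo{1}$ and $\ws_1(\P,\D)=\bgo{1}$.

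For the negative case, suppose $G$ contains $T$ neither as a subgraph nor, more to the point, as a minor. Here I would invoke the structure theorem Claim~\ref{clm:structure}, which describes how a $T$-minor-free $G$ must decompose, and use that description to construct a ``potential'' vector $\ket{w'}\in\R^{(V_G\cup\{s,t\})\times V_{T,l}}$ — a real value for each $G$-vertex in each channel, with $s$ and $t$ pinned — satisfying $\braket{\tau}{w'}=1$ and having zero tension across every available input vector, i.e.\ $\Pi(x)A^{\dagger}\ket{w'}=0$. Its size $\|A^{\dagger}\ket{w'}\|^2$ is then a sum of squared potential differences over the $\bgo{n^2}$ pairs of $G$-vertices and the $\bgo{1}$ tree positions, each term $\bgo{1}$, so $\ws_0(\P,\D)=\bgo{n^2}$. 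Combining the two bounds, $\ws(\P,\D)=\sqrt{\ws_0(\P,\D)\,\ws_1(\P,\D)}=\bgo{n}$, and Corollary~\ref{cor:wsqqc} delivers a bounded-error quantum algorithm with $\bgo{n}$ queries.

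The hard part is the negative case: proposing a candidate potential is easy, but showing simultaneously that it annihilates every available edge vector, that $\braket{\tau}{w'}=1$, and that its norm is only $\bgo{n^2}$ all rest on knowing that a $T$-minor-free graph cannot be too tangled — precisely Claim~\ref{clm:structure} — so the real work is to prove that structure theorem and then match the potential to the decomposition it provides. A lesser but genuine subtlety is pinning down the free input vectors and the $M(x,f)$ bookkeeping in the construction precisely enough that ``$\P$ accepts'' means ``$G$ contains a $T$-subdivision'' and nothing weaker.
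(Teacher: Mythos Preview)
Your construction omits the color-coding step, and without it the span program does not compute the right function on~$\D$. Take $T$ to be the star with root $r$ and two leaves $f_1,f_2$, and let $G$ consist of a single isolated vertex $u$. In your program the free input vectors $(\ket{s}-\ket{u})\tp(\ket{f_1}+\ket{f_2})$, $(\ket{u}-\ket{t})\tp\ket{f_1}$ and $(\ket{u}-\ket{t})\tp\ket{f_2}$ are all available and already sum to $\ket{\tau}$, so $\P$ accepts --- even though $G$ contains $T$ neither as a subgraph nor as a minor. The same phenomenon occurs for any $T$: since every $u\in V_G$ is allowed to serve as the endpoint of an $s$-edge \emph{and} of a $t$-edge, the target is always reachable using only free inputs. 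Nothing in your program forces distinct nodes of $T$ to be realized by distinct vertices of $G$, so the claim that ``$\P$ accepts $G$ iff $G$ contains a subdivision of $T$'' is false. In the paper the random coloring $c:V_G\to V_T$ is exactly what enforces this: the free $s$-edges go only to $c^{-1}(r)$, the free $t$-edges only to $\bigcup_f c^{-1}(f)$, and a non-free edge is present only between $c^{-1}(x)$ and $c^{-1}(y)$ for $(x,y)\in E_T$, so any $s$--$t$ flow in channel $f$ is forced to traverse the blocks along the $r$--$f$ path one level at a time.

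This also breaks your negative-witness plan. Claim~\ref{clm:structure} is a statement about the colored graph $G_c$: its output is a family of sets $V_a\subseteq c^{-1}(a)$ and $V_{a,b}\subseteq c^{-1}(a)$, and the paper's witness $\ket{w'}$ is supported on those sets. With no coloring there is no $c^{-1}(a)$, the claim has no content, and there is nothing to build the potential from. Even the assertion that the paper's (colored) $\P$ characterizes $T$-subdivisions is off: the paper explicitly exhibits a colored graph that $\P$ accepts which contains $T$ only as a minor, and remarks that the exact class of accepted graphs is not easy to describe --- which is precisely why the problem is posed as subgraph/\emph{not-a-minor}.
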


\begin{proof}
Let $G=(V_G,E_G)$ be a graph with $n$ vertices. We need to decide whether $G$ contains $T$ as a subgraph or $G$ does not contain $T$ as a minor, under the promise that one of the cases holds.\\

\noindent\textbf{Color coding.} We use the color coding technique from \cite{colorcoding}. Namely, we map each vertex $u \in V_G$ to a uniformly random node $c(u) \in V_T$, and the vertices of $G$ are colored independently. Then, we discard all the ``badly" colored edges, i.e. we remove any edge $(u,v) \in E_G$ such that $(c(u),c(v))\not\in E_T$. Let $c:V_G \to V_T$ be a random coloring, and let $G_c=(V_G,E_{G_c})$ be the colored graph corresponding to $c$ \footnote{Note that we can determine whether an edge is present in $G_c$ or not by querying the presence of this edge in $G$ and using the information about $c$.}.

We say that $G_c$ contains a correctly colored $T$-subgraph if
if there is an injection $\iota:V_T \to V_G$ such that: (1) $c \circ \iota$ is the identity, i.e. $c(\iota(a))=a$ for any $a \in V_T$; (2) for any $x,y \in V_T$, if $(x,y) \in E_T$, then $(\iota(x),\iota(y))\in E_{G_c}$. 

We will construct a span program that accepts if $G_c$ contains a correctly colored $T$-subgraph, and rejects if $G_c$ does not contain $T$ as a minor. 
Note that if $G$ contains $T$ as a subgraph, then this subgraph is colored correctly with probability at least $|V_T|^{-|V_T|}=\Omega(1)$. So $G_c$ contains $T$ as a subgraph with constant probability. On the other hand, if $G$ does not contain $T$ as a minor, then $G_c$ does not contain $T$ as a minor either. Thus, evaluating our span program for a constant number of independent colorings would suffice to detect $T$ with probability at least $2/3$. \\

\noindent\textbf{Span program.} Our span program $\P$ is defined over the $|V_{T,l}|\lb n+2\rb$-dimensional space spanned by the vectors
\beq
\{ \ket{u} \tp \ket{f}: u \in \{s,t\} \cup V_G, f \in V_{T,l} \},
\eeq
where $\braket{u}{v}=\delta_{u,v}$, for any $u,v \in \{s,t\}\cup V_G$, and $\braket{f}{g}=\delta_{f,g}$, for any $f,g \in V_{T,l}$. 

The target vector of $\P$ is 
\beq
\ket{\tau} \deq \lb \ket{s}-\ket{t} \rb \tp \lb \sl_{f \in V_{T,l}}{\ket{f}} \rb.
\eeq 

The input vectors of $\P$ include the following ones:
\bit
\item For any $u \in c^{-1}(r)$, there is a free input vector 
\beq
\ket{(s,u)} \deq \lb \ket{s}-\ket{u} \rb \tp \lb \sl_{f \in V_{T,l}}{\ket{f}} \rb.
\eeq 
\item For any $f \in V_{T,l}$ and $u \in c^{-1}(f)$, there is a free input vector 
\beq
\ket{(u,t)}\deq (\ket{u}-\ket{t})\tp \ket{f}).
\eeq 
\item For any $x \in V_{T,i}$, $y \in C(x)$, $u \in c^{-1}(x)$
and $v \in c^{-1}(y)$, there is an input vector 
\beq
\ket{(u,v)}\deq \lb \ket{u}-\ket{v} \rb \tp \lb \sl_{f \in V^{y}_{T,l}}{\ket{f}} \rb,
\eeq
and this input vector is available if and only if $(u,v) \in E_{G_c}$.\\
\eit

Here is a more compact way to describe these input vectors. We add two special vertices $s$ and $t$ to $G_c$, and color $s$, $t$ as themselves, i.e. $c(s)=s$ and $c(t)=t$. Furthermore, we connect $s$ to the vertices in $c^{-1}(r)$, and connect $t$ to the vertices in $c^{-1}(f)$ for any $f \in V_{T,l}$. Let $G'_c$ be this modified graph. For any $x \in V_T \cup \{s,t\}$, we call $c^{-1}(x)$ a \emph{block}. Then $G'_c$ contains only edges between \emph{adjacent} blocks. Namely, $c^{-1}(x)$ and $c^{-1}(y)$ are adjacent if and only if $x$ and $y$ are adjacent, i.e. $x \sim y$. Then, for any $u,v$ in adjacent blocks, we have an input vector
\beq
\ket{(u,v)}=(\ket{u}-\ket{v}) \tp \lb \sum_{f \in S(c(u),c(v))}\ket{f} \rb,
\label{eq:inputvectors}
\eeq
and this input vector is available if and only if the edge $(u,v)$ is present in $G'_c$. \\

\noindent\textbf{An example.}
Let $T$ be the complete $2$-level binary tree, and let $G$ be a $12$-vertex graph shown in Fig.\ref{fig:goodG}. Let
$c:V_G \to V_T$ be a coloring defined as
$c(u_1)=c(u_2)=r$,
$c(u_3)=c(u_4)=d_1$, $c(u_5)=c(u_6)=d_2$, $c(u_7)=f_1$, $c(u_8)=c(u_9)=f_2$,
$c(u_{10})=c(u_{11})=f_3$, $c(u_{12})=f_4$. Then, after removing the badly colored edges (such as $(u_2,u_8)$), the colored graph $G_c$ is shown in Fig.\ref{fig:goodGc}.

\begin{figure}[H]
\begin{center}
\subfigure[] 
{\includegraphics[scale=0.5]{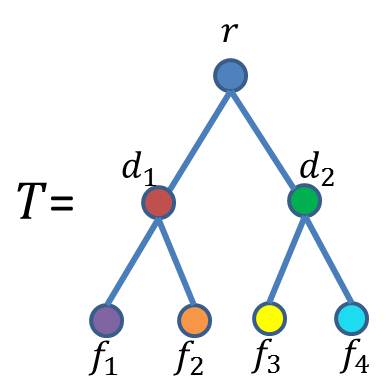}}
\label{fig:goodT}
\subfigure[] 
{\includegraphics[scale=0.5]{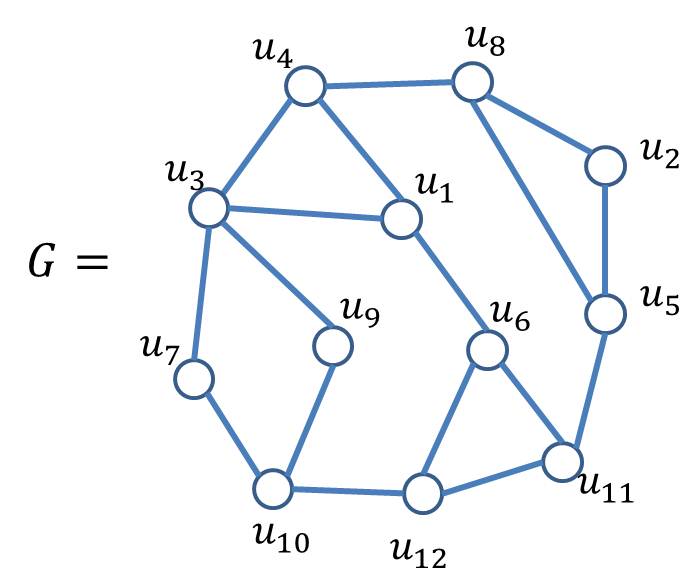}
\label{fig:goodG}}
\subfigure[] 
{\includegraphics[scale=0.5]{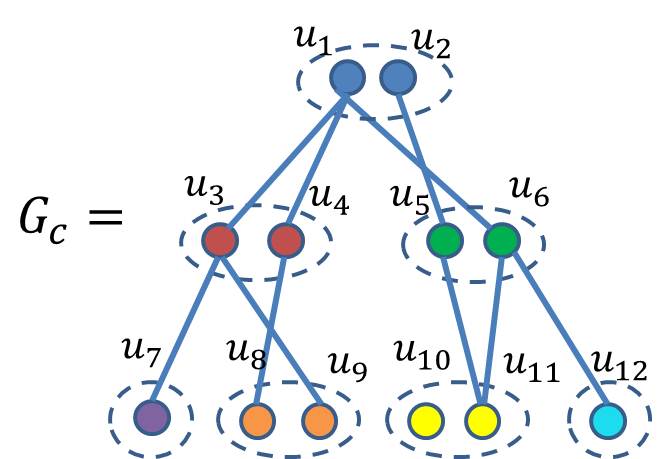}
\label{fig:goodGc}
}
\end{center}
\caption{An example of coloring coding.} 
\label{fig:color}
\end{figure}

Then the span program $\P$ is defined as follows:
\bit
\item Target vector: $(\ket{s}-\ket{t}) \tp (\sum_{j \in [4]}\ket{f_j})$;
\item Free input vectors: 
\bit
\item $(\ket{s}-\ket{u_i}) \tp (\sum_{j \in [4]}\ket{f_j})$, for $i \in [2]$;
\item $(\ket{u_j}-\ket{t}) \tp \ket{f_{\chi(j)}}$,
for $j \in \{7,8,\dots,12\}$, where $\chi(7)=1$, $\chi(8)=\chi(9)=2$, $\chi(10)=\chi(11)=3$, $\chi(12)=4$;
\eit
\item Other input vectors:
\bit
\item $(\ket{u_i}-\ket{u_j}) \tp (\ket{f_1}+\ket{f_2})$ for $i \in \{1,2\}$ and $j \in \{3,4\}$;
\item $(\ket{u_i}-\ket{u_j}) \tp (\ket{f_3}+\ket{f_4})$ for $i \in \{1,2\}$ and $j \in \{5,6\}$;
\item $(\ket{u_i}-\ket{u_7}) \tp \ket{f_1}$ for $i \in \{3,4\}$;
\item $(\ket{u_i}-\ket{u_j}) \tp \ket{f_2}$ for $i \in \{3,4\}$ and $j \in \{8,9\}$;
\item $(\ket{u_i}-\ket{u_j}) \tp \ket{f_3}$ for $i \in \{5,6\}$ and $j \in \{10,11\}$;
\item $(\ket{u_i}-\ket{u_{12}}) \tp \ket{f_4}$ for $i \in \{5,6\}$.
\eit
\eit

\noindent\textbf{Witness size.} Next, we will show that our span program $\P$ indeed solves the subgraph/not-a-minor problem for $T$, and along the way we also obtain upper bounds on the positive and negative witness sizes of $\P$. We will consider the positive and negative cases separately.\\

\noindent\textbf{Positive case.} Let us first consider the positive case, i.e. $G_c$ contains $T$ as a subgraph.

\begin{lemma}
Suppose $G_c$ contains $T$ as a subgraph. Then $\P$ accepts $G_c$. Moreover, the witness size of $\P$ on $G_c$ is $\bgo{1}$.
\label{lem:posws}
\end{lemma}
\begin{proof}
Suppose $G_c$ contains $T$ as a subgraph. Then there exists an injection $\iota: V_T \to V_G$ such that, for any $a \in V_{T,i}$ and $b \in C(a)$, $(\iota(a),\iota(b)) \in E_{G_c}$ and hence the input vector $\ket{(\iota(a),\iota(b))}$ is available. Thus, the target vector $\ket{\tau}$ can be written as
\beq
\ket{\tau}=
\ket{(s,\iota(r))}
+\sl_{a \in V_{T,i}}{\sl_{b \in C(a)}{\ket{(\iota(a),\iota(b))}}}
+\sl_{f \in V_{T,l}}{\ket{(\iota(f),t)}}.
\eeq
So $\P$ accepts $G_c$. Furthermore, the witness size of $\P$ on $G_c$ is $|E_T|=\bgo{1}$, since $T$ has constant size.
\end{proof}

For example, consider the $T$ and $G_c$ in Fig.\ref{fig:color}. $G_c$ contains $T$ as a subgraph, and $\P$ accepts $G_c$. The solution is
\beba
\ket{\tau}&=&\ket{(s,u_1)}+\ket{(u_1,u_3)}+\ket{(u_1,u_6)}
+\ket{(u_3,u_7)}+\ket{(u_3,u_9)}
+\ket{(u_6,u_{11})}+\ket{(u_6,u_{12})}\\
&&+\ket{(u_7,t)}+\ket{(u_9,t)}
+\ket{(u_{11},t)}+\ket{(u_{12},t)}.
\eeea
This solution can be graphically represented by four ``parallel" flows from $s$ to $t$ shown in Fig.\ref{fig:good}. 
\begin{figure}[H]
\center
\includegraphics[scale=0.45]{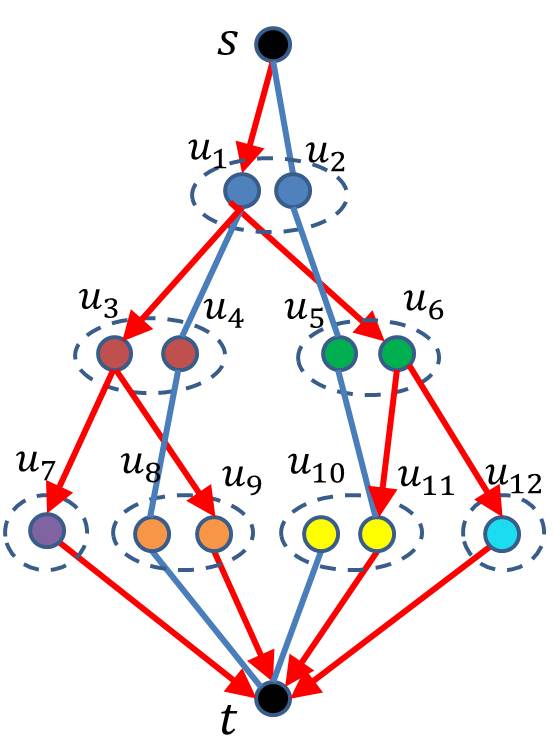}
\caption{A graphical representation of the solution to the span program $\P$ in the positive case. For each edge used by the flows, the coefficient for the corresponding input vector is $+1$. The coefficient for any other input vectors is $0$.}
\label{fig:good}
\end{figure}

\noindent\textbf{Negative case.} Now let us consider the negative case, i.e. $G_c$ does not contain $T$ as a \emph{minor}.

But first, let us explain why we do not just consider the case $G_c$ does not contain $T$ as a \emph{subgraph}. The problem is that the span program $\P$ fails to solve this problem in general. Namely,
\begin{claim}
There exists some $G_c$ which does not contain $T$ as a subgraph but $\P$  accepts it.
\end{claim}
\begin{proof}
For example, consider the $T$ and $G_c$ in Fig.\ref{fig:bad}. Here $T$ is the complete $2$-level binary tree, and $G_c$ is a $12$-vertex graph shown in Fig.\ref{fig:badGc}. $G_c$ does not contain $T$ as a subgraph, but $\P$ accepts $G_c$. The solution is 
\beba
\ket{\tau}&=&
\ket{(s,u_1)}-\ket{(s,u_2)}+\ket{(s,u_3)}
+\ket{(u_1,u_5)}+\ket{(u_1,u_6)}-\ket{(u_2,u_6)}
-\ket{(u_2,u_6)}
+\ket{(u_3,u_7)}\\
&&+\ket{(u_3,u_8)}
+\ket{(u_5,u_9)}
+\ket{(u_5,u_{10})}
+\ket{(u_8,u_{11})}
+\ket{(u_8,u_{12})}
+\ket{(u_9,t)}
+\ket{(u_{10},t)}\\
&&+\ket{(u_{11},t)}
+\ket{(u_{12},t)}.
\eeea
This solution can be graphically represented as four flows from $s$ to $t$ shown in Fig.\ref{fig:badsol}. Notice that these flows move back and forth between adjacent ``layers", where each layer is colored by the nodes at the same depth of $T$. Also, note that if we contract the edges $(u_1,u_6)$, $(u_2,u_6)$, $(u_2,u_7)$ and $(u_3,u_7)$ of $G_c$, we would get a new graph isomorphic to $T$. In other words, $G_c$ contains $T$ as a \emph{minor}.
\begin{figure}[H]
\center
\subfigure[]
{\includegraphics[scale=0.5]{T2.jpg}}
\label{fig:badT}
\subfigure[]
{\includegraphics[scale=0.55]{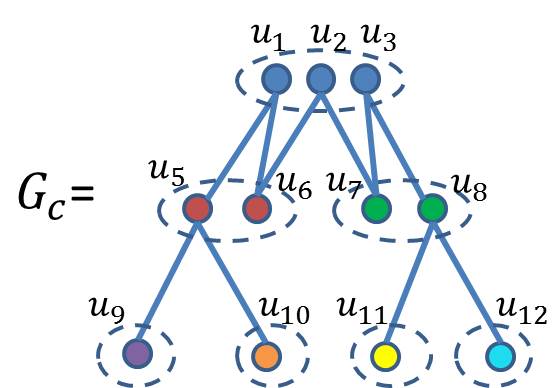}}
\label{fig:badGc}
\subfigure[]
{\includegraphics[scale=0.55]{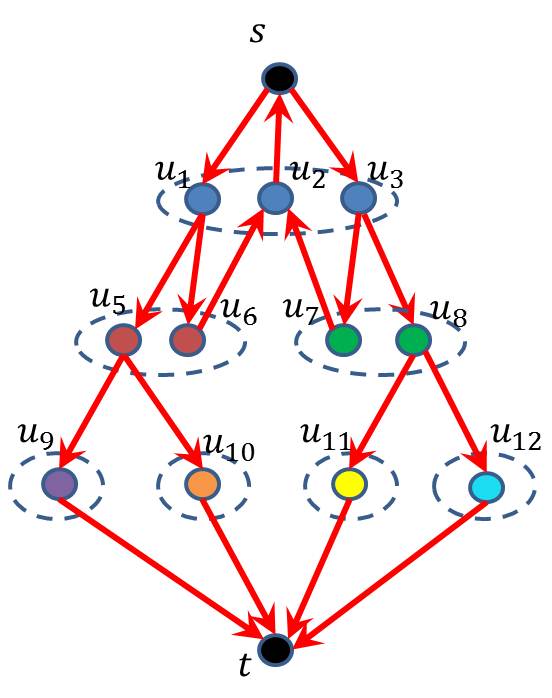}}
\label{fig:badsol}
\caption{An example showing that the span program $\P$ does not solve the tree containment problem in general. Here $G_c$ does not contain $T$ as a subgraph, but $\P$ accepts it. The solution is graphically illustrated by Fig.\ref{fig:badsol}. For each edge used by the flows, the coefficient for the corresponding input vector is $+1$ or $-1$, depending on whether the flow moves towards $t$ or $s$, respectively. The coefficient for any other input vector is $0$.}
\end{figure}
\label{fig:bad}
\end{proof} 

We can directly prove that if $\P$ accepts $G_c$, then $G_c$ must contain $T$ as a minor. But here we prefer to use a different approach. We will assume that $G_c$ does not contain $T$ as a minor. Then, we will show that $\P$ must reject $G_c$ by explicitly giving a negative witness for $G_c$. The advantage of this approach is that we not only get to know that $\P$ accepts only certain graphs that contain $T$ as a minor, but also obtain an upper bound on the negative witness size of $\P$.

Now let us return to the negative case, i.e. $G_c$ does not contain $T$ as a minor. For convenience, we introduce the following notation. For any $(a,b) \in E_T$ and $u \in c^{-1}(a)$, let $N_{a \to b}(u)=\{v \in c^{-1}(b): (u,v) \in E_{G_c}\}$. For any $U \subseteq c^{-1}(a)$, let $N_{a \to b}(U)=\cup_{u \in U} N_{a \to b}(u)$. For any $a \in V_T$, let $Y^a_c = c^{-1}(V^a_T) = \cup_{b \in V^a_T} c^{-1}(b)$ and let $G^a_c = G_c|_{Y^a_c}$.

We claim that $G_c$ must satisfy the following property:

\begin{claim}
For any graph $G=(V_G,E_G)$ and coloring $c:V_G \to V_T$, if $G_c=(V_G,E_{G_c})$ does not contain $T$ as a minor, then there exist $\{V_a \subseteq c^{-1}(a): a \in V_{T,i}\}$ and $\{V_{a,b} \subseteq c^{-1}(a): a \in V_{T,i}, b \in C(a)\}$ such that
\ben
\item 
$\forall a \in V_{T,i}$, $V_a$ is the disjoint union of $V_{a,b}$'s for $b \in C(a)$; 
\item
$V_r=c^{-1}(r)$;
\item 
$\forall a \in V_{T,i}$, $\forall b \in C(a) \cap V_{T,i}$, 
$N_{a \to b}(V_{a,b}) \subseteq V_b$
and
$N_{b \to a}(V_b) \subseteq V_{a,b}$;
\item
$\forall a \in V_{T,i}$, $\forall b \in C(a) \cap V_{T,l}$, 
$N_{a \to b}(V_{a,b})=\varnothing$.
\een
\label{clm:structure}
\end{claim}
\begin{proof}
See Appendix \ref{apd:pfclm}.
\end{proof}

Intuitively, the $V_a$'s contain the ``bad" vertices that are responsible for the fact that $G_c$ does not contain $T$ as a minor. For any vertex $u \in V_a$, there is no subgraph of $G^a_c$ that can be contracted into a tree rooted at $u$ and isomorphic to $T^a$. In particular, since $G_c$ does not contain $T$ as a minor, for any vertex $u \in c^{-1}(r)$, there is no subgraph of $G_c$ that can be contracted into a tree rooted at $u$ and isomorphic to $T$, and hence $V_r=c^{-1}(r)$. Furthermore, if $u \in V_{a,b} \subseteq V_a$ for some $b \in C(a)$, then $u$ is ``bad" because its ``children" in $c^{-1}(b)$ are ``bad". Namely, if we attempt to use $u$ as the root of $T^a$, then we will fail because we will not be able to get a {\emph {complete}} $T^b$ (which is a subtree of $T^a$). In the degenerate case, $b$ is a leaf, and $u \in V_{a,b}$ if and only if it has no neighbor in $c^{-1}(b)$. Condition $3$ tells us that these ``bad" vertices form a ``connected component" in some sense, and it is crucial for bounding the witness size. Fig.\ref{fig:clm1} demonstrates an example of such $V_a$'s and $V_{a,b}$'s. 

\begin{figure}[H]
\center
\subfigure[]
{\includegraphics[scale=0.35]{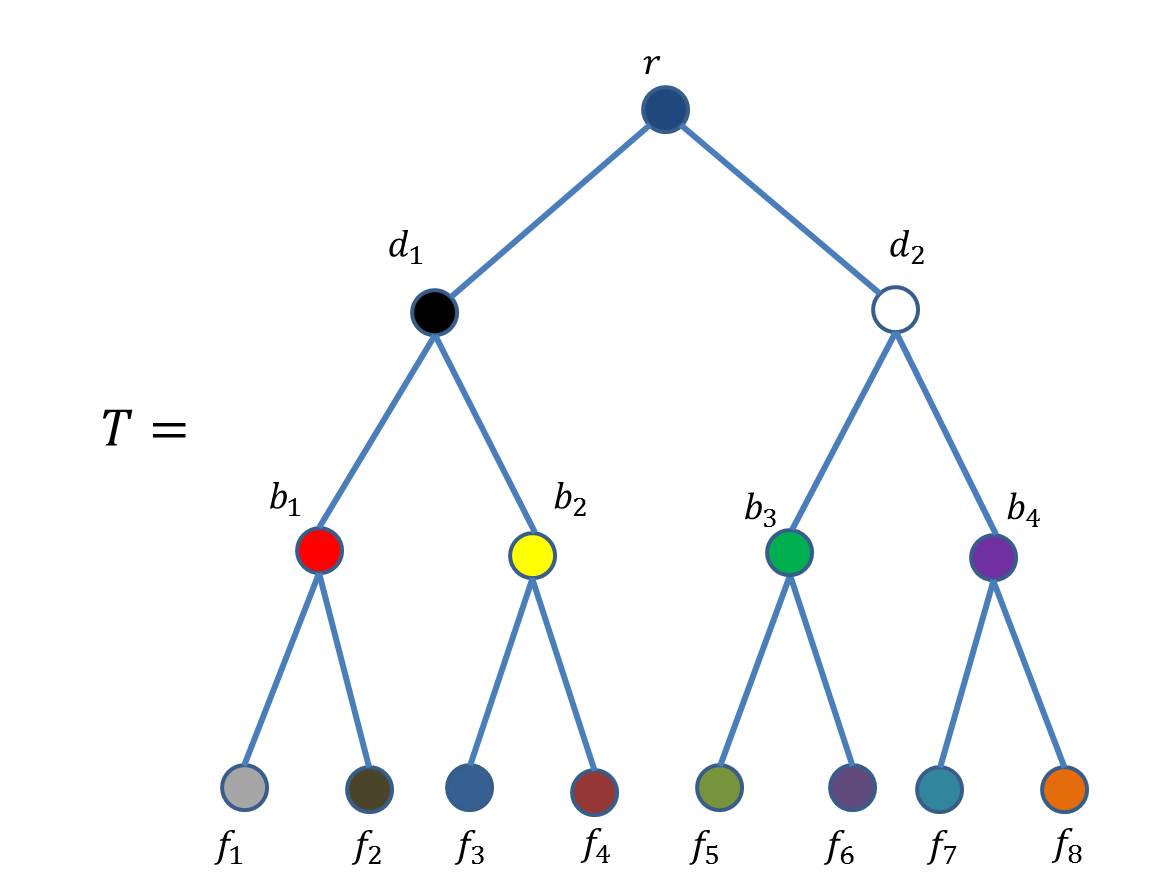}}
\subfigure[]
{\includegraphics[scale=0.45]{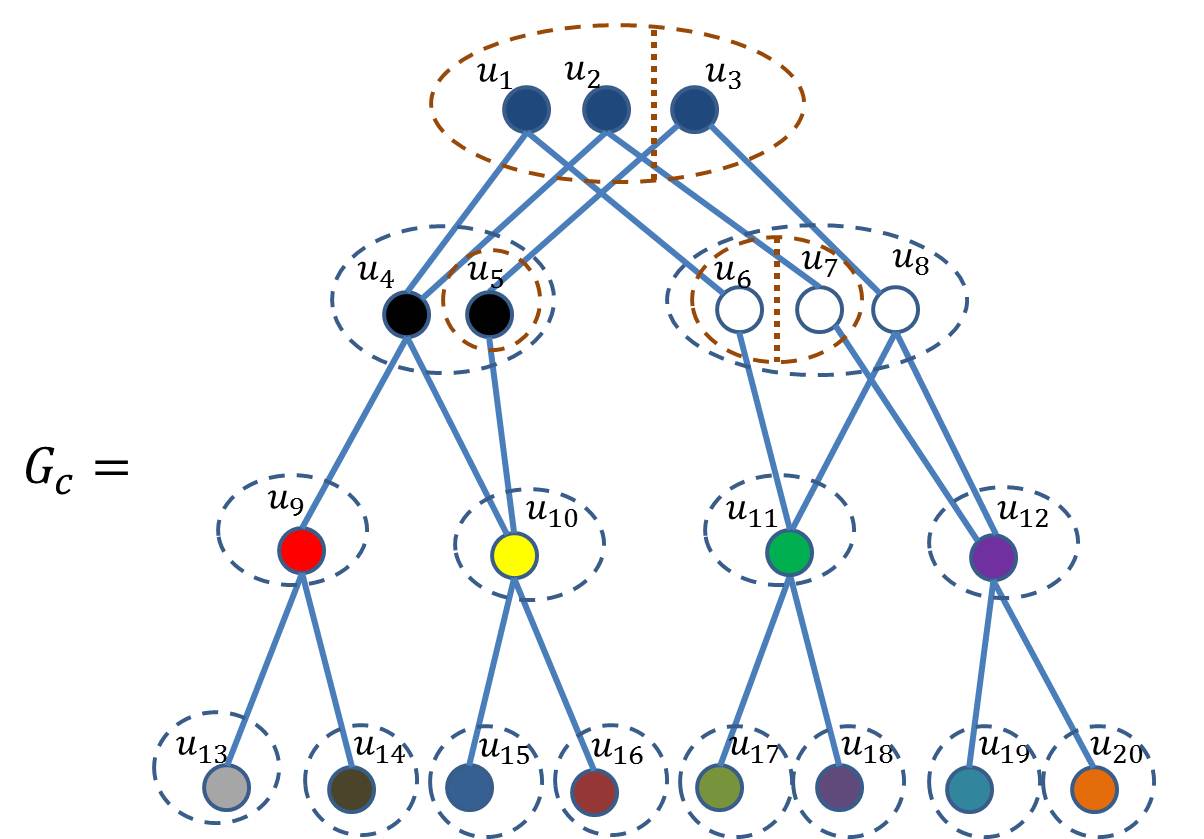}}
\caption{In this example, $T$ is the full $3$-level binary tree, and $G_c$ does not contain $T$ as a minor. We have $V_r=\{u_1,u_2,u_3\}$, $V_{r,d_1}=\{u_3\}$, 
$V_{r,d_2}=\{u_1,u_2\}$, $V_{d_1}=V_{d_1,b_1}=\{u_5\}$,
$V_{d_2}=\{u_6,u_7\}$, $V_{d_2,b_3}=\{u_7\}$, $V_{d_2,b_4}=\{u_6\}$,
and any other $V_a$ or $V_{a,b}$ is $\varnothing$. Note that, 
$N_{r \to d_2}(V_{r,d_2})=\{u_6,u_7\}=V_{d_2}$,
$N_{d_2 \to b_3}(V_{d_2,b_3})=\varnothing=V_{b_3}$
and
$N_{d_2 \to b_4}(V_{d_2,b_4})=\varnothing=V_{b_4}$, etc.
}
\label{fig:clm1}
\end{figure}

\begin{lemma}
Suppose $G_c$ does not contain $T$ as a minor. Then $\P$ rejects it. Moreover, the witness size of $\P$ on $G_c$ is $\bgo{n^2}$.
\label{lem:negws}
\end{lemma}
\begin{proof}
Using Claim \ref{clm:structure}, we build a negative witness for $G_c$ as follows. Let 
\beq
\ket{w}=\ket{w_s}+\sum_{a \in V_{T,i}} \ket{w_a},
\eeq 
where 
\beba
\ket{w_s}=\ket{s} \tp \lb \dfrac{1}{|V_{T,l}|} \sl_{f \in V_{T,l}} {\ket{f}} \rb, \\
\ket{w_a}=\sl_{b \in C(a)}{\ket{w_{a,b}}}
\eeea
where
\beq
\ket{w_{a,b}}= \lb \sl_{u \in V_{a,b}}{\ket{u}} \rb \tp
\lb \dfrac{1}{|V^b_{T,l}|} \sl_{f \in V^b_{T,l}}{\ket{f}} \rb.
\eeq

Now we prove that $\ket{w}$ is a valid negative witness. First,
\beq
\braket{w}{\tau}=\braket{w_s}{\tau}+
\sum_{a \in V_{T,i}}\sum_{b \in C(a)} \braket{w_{a,b}}{\tau}
=1+0=1.
\eeq 
Next, we show that $\ket{w}$ is orthogonal to all available input vectors on $G_c$. We deal with four kinds of available input vectors separately:

\ben
\item For any free input vector $\ket{(s,u)}=(\ket{s}-\ket{u}) \tp (\sum_{f \in V_{T,l}} \ket{f})$ where $u \in c^{-1}(r)$: By conditions 1 and 2 of Claim \ref{clm:structure}, $c^{-1}(r)=V_r=\cup_{b \in C(r)} V_{r,b}$, so we can find $b \in C(r)$ such that $u \in V_{r,b}$. Then we have $\braket{w}{(s,u)}=0$, since
$\braket{w_s}{(s,u)}=1$, $\braket{w_{r,b}}{(s,u)}=-1$, and $\braket{w_{a',b'}}{(s,u)}=0$ for any other $(a',b')$.

\item For any free input vector $\ket{(v,t)}=(\ket{v}-\ket{t}) \tp \ket{f}$ where $v \in c^{-1}(f)$ for some $f \in V_{T,l}$: We have $\braket{w}{(v,t)}=0$, since $\braket{w_s}{(v,t)}=\braket{w_{a,b}}{(v,t)}=0$ for any $(a,b)$.

\item For any available input vector $\ket{(u,v)}=(\ket{u}-\ket{v}) \tp (\sum_{f \in V^b_{T,l}} \ket{f})$ where $u \in c^{-1}(a)$, $v \in c^{-1}(b)$ for some $a \in V_{T,i}$ and $b \in C(a) \cap V_{T,i}$: The availability of this input vector implies $(u,v) \in E_{G_c}$. Then, by condition 3 of Claim \ref{clm:structure}, there are two possible cases:
\bit
\item $u \in V_{a,b}$, $v \in V_{b}$: In this case, $\braket{w_{a,b}}{(u,v)}=1$. Also, by condition 1 of Claim \ref{clm:structure}, there exists $d \in C(b)$ such that $v \in V_{b,d}$. Then $\braket{w_{b,d}}{(u,v)}=-1$, and $\braket{w_{s}}{(u,v)}=\braket{w_{a',b'}}{(u,v)}=0$ for any other $(a',b')$. Thus,
$\braket{w}{(u,v)}=0$.
\item $u \not\in V_{a,b}$, $v \not\in V_b$: In this case, we simply have
$\braket{w_{s}}{(u,v)}=\braket{w_{a',b'}}{(u,v)}=0$ for any $(a',b')$. Hence, $\braket{w}{(u,v)}=0$.
\eit
\item For any available input vector $\ket{(u,v)}=(\ket{u}-\ket{v}) \tp \ket{f}$ where $u \in c^{-1}(a)$, $v \in c^{-1}(f)$ for some $a \in V_{T,i}$ and $f \in C(a) \cap V_{T,l}$: The availability of this input vector implies $(u,v) \in E_{G_c}$. Then, by condition 4 of Claim 1, we must have
$u \not\in V_{a,f}$. It follows that $\braket{w_s}{(u,v)}=\braket{w_{a',b'}}{(u,v)}=0$ for any $(a',b')$, and hence $\braket{w}{(u,v)}=0$.
\een

Now, note that $\ket{w}$ can be written as
\beq
\ket{w}=\sum_{u \in \{s\}\cup V_G} \sum_{f \in V_{T,l}}
\mu_{u,f} \ket{u}\tp \ket{f}
\eeq 
where
$|\mu_{u,f}|\le 1$ for any $(u,f)$. Meanwhile, every input vector of $\P$ is the sum of a constant number of some $\pm \ket{u} \tp \ket{f}$'s. Thus, the inner product between $\ket{w}$ and any input vector has norm $O(1)$. Since there are $O(n^2)$ input vectors in $\P$, the negative witness size of $\P$ on $G_c$ is at most $O(n^2)$.
\end{proof}

Combining Lemma \ref{lem:posws} and Lemma \ref{lem:negws} together, we know that $\P$ solves the subgraph/not-a-minor problem for $T$, and it has witness size $O(n)$. Then, by Corollary \ref{cor:wsqqc}, this problem can be solved by a bounded-error quantum algorithm with $O(n)$ query complexity.
\end{proof}

\section{Time-efficient implementation} 

Theorem \ref{thm:query} implies the existence of a \emph{query-efficient} quantum algorithm for the subgroup/not-a-minor problem for any tree. However, this algorithm is not necessarily \emph{time-efficient}. Indeed, any span program can be evaluated by running phase estimation \cite{pe,pe2} on some quantum walk operator associated with this span program \cite{spqqc2,spstconnect}. But this approach does not always yield time-efficient quantum algorithms, because this quantum walk operator can be difficult to implement in general. Using some ideas from \cite{spstconnect}, we nevertheless manage to implement this operator for our span program $\P$ quickly and hence give a time-efficient implementation of the algorithm from Theorem \ref{thm:query}.

\begin{theorem}
The algorithm from Theorem \ref{thm:query} can be implemented in $\tbgo{n}$ quantum time.
\label{thm:time}
\end{theorem}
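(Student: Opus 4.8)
The plan is to follow the standard route of Reichardt \cite{spqqc1,spqqc2} and Belovs--Reichardt \cite{spstconnect} for evaluating a span program by phase estimation on a two-reflection quantum walk, and then to verify that, specialized to our $\P$, every ingredient costs only $\polylog{n}$ elementary gates and $\bgo{1}$ queries per step. Recall that the algorithm behind Corollary \ref{cor:wsqqc} runs phase estimation, with $\bgo{\log n}$ repetitions and the usual $\bgo{\log n}$-qubit Fourier-transform overhead, on a unitary $U(G_c)=R_2R_1$ acting on the span of $\{\ket{0}\}\cup\{\ket{j}:j\in I\}$ (one coordinate per input vector of $\P$, i.e.\ per potential edge of $G'_c$, plus a target coordinate $\ket{0}$): here $R_1$ flips the sign of $\ket{j}$ for every currently unavailable input vector, $R_2=2\Pi_{\ker{M}}-I$ with $\Pi_{\ker{M}}$ the orthogonal projector onto the kernel of $M=\ket{\tau}\bra{0}+\sum_{j\in I}\ket{v_j}\bra{j}$ (suitably normalized), the initial state $\ket{w_0}$ is a simple product state derived from $\ket{\tau}$, and the accept/reject decision is read from the phase-estimation register. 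This uses $\tbgo{\ws(\P,\D)}$ applications of $U(G_c)$, and by Lemmas \ref{lem:posws} and \ref{lem:negws} we have $\ws_1(\P,\D)=\bgo{1}$ and $\ws_0(\P,\D)=\bgo{n^2}$, so $\ws(\P,\D)=\bgo{n}$ and the count is $\tbgo{n}$. It therefore suffices to implement one application of $U(G_c)$, together with the preparation of $\ket{w_0}$, in $\tbgo{1}$ time.

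First I would dispose of $R_1$, the only input-dependent piece. An input vector $\ket{(u,v)}$ with $u,v\in V_G$ is available iff $(u,v)\in E_G$ and $c(u)\sim c(v)$ in $T$, while the free vectors $\ket{(s,u)}$ and $\ket{(u,t)}$ are always available; hence one query to the adjacency matrix of $G$, two evaluations of the coloring $c$, and a constant-size look-up in $T$ decide availability, so $R_1$ is a phase oracle costing $\bgo{1}$ queries and $\polylog{n}$ gates, \emph{provided $c$ is cheap to evaluate and can be recomputed on demand}. Here I would note that the color-coding argument behind Theorem \ref{thm:query} only needs a fixed copy of $T$ in $G$ to be properly coloured with probability $\Omega(1)$, for which $\bgo{|V_T|}$-wise independence of $c$ is enough; since $|V_T|=\bgo{1}$, one may draw $c$ from an $\bgo{|V_T|}$-wise independent family of hash functions $V_G\to V_T$ with seed length $\bgo{\log n}$ and evaluation time $\polylog{n}$ \cite{colorcoding}, fix the seed once, and recompute $c(u)$ whenever needed.

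The crux is implementing $R_2$, the reflection about $\ker{M}$, in $\polylog{n}$ gates with no queries, for which I would imitate the ``half-edge'' (edge-doubling) decomposition of \cite{spstconnect}, using that $\P$ is by construction a ``direct sum'' of $|V_{T,l}|$ span programs for undirected $st$-connectivity, coupled only through the constant-dimensional leaf register $\C^{V_{T,l}}$. Regrouping the label space by the endpoint $u\in\{s,t\}\cup V_G$, one checks that $M$ becomes block-diagonal and that $\ker{M}$, hence $R_2$, factors as $\bigoplus_{u}D_u$, where $D_u$ acts on the $\bgo{n}$-dimensional space of half-edges at $u$ tensored with the leaf register. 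Because every input vector incident to $u$ has the fixed form $(\ket{u}-\ket{v})\tp\lb\sl_{f\in S(c(u),c(v))}{\ket{f}}\rb$, and the leaf-sets $S(c(u),\cdot)$ attached to the $\bgo{1}$ blocks adjacent to $c^{-1}(c(u))$ are pairwise disjoint subsets of $V_{T,l}$, the block $D_u$ reduces to a constant number of reflections of the type ``reflect about the uniform superposition over the potential neighbours of $u$ in a given adjacent block'', acting on disjoint registers. Each such reflection is a Grover-type diffusion implementable by Hadamard transforms in $\polylog{n}$ gates; treating \emph{all} of $V_G$ (rather than a hard-to-prepare subset) as the potential-neighbour set, with badly coloured or absent edges simply left unavailable by $R_1$, is what keeps these diffusions query-free. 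Since the $D_u$ act on the disjoint subspaces singled out by the $\lceil\log(n+2)\rceil$-qubit ``$u$''-register, $R_2$ is realized by a single $\polylog{n}$-size circuit, and the product state $\ket{w_0}$ (a product of a constant-dimensional part with a uniform-superposition part) is likewise preparable by a $\polylog{n}$-gate circuit.

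Assembling the pieces, one application of $U(G_c)=R_2R_1$ costs $\bgo{1}$ queries and $\polylog{n}$ gates; phase estimation performs $\tbgo{n}$ of them; the $\bgo{\log n}$ repetitions, the Fourier transforms on $\bgo{\log n}$ qubits, and the $\bgo{1}$ independent colorings contribute only $\polylog{n}$ overhead; hence the algorithm of Theorem \ref{thm:query} runs in $\tbgo{n}$ quantum time (and, as before, $\bgo{n}$ queries). The main obstacle I anticipate is the second half of the third paragraph: setting up the half-edge space so that, \emph{with the leaf-register coupling present and across the several distinct types of inter-block edges}, $\ker{M}$ genuinely splits into the per-vertex blocks $D_u$, and then checking that each $D_u$ is generated purely by the query-free uniform-superposition reflections. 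Once this structural fact is in place, the remaining estimates are routine bookkeeping of the kind already carried out in \cite{spstconnect}.
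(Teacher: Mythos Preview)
Your overall framework is right, and your treatment of $R_1$ and of the coloring is fine. The gap is in paragraph three: the assertion that ``$M$ becomes block-diagonal and $\ker{M}$, hence $R_2$, factors as $\bigoplus_u D_u$'' is not correct, and this is precisely the place where the real work lies.

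Each input vector $\ket{(u,v)}=(\ket{u}-\ket{v})\otimes(\cdots)$ has support on \emph{both} $\ket{u}$ and $\ket{v}$ in the target space, so $M$ does not block-diagonalize by endpoint, and $\ker{M}$ (which encodes global cycle/flow constraints) does not split into per-vertex pieces. What the half-edge trick of \cite{spstconnect} actually gives you is a factorization $V'=A^{\dagger}B$ in a larger space, where $\refl{A}$ decomposes per vertex and $\refl{B}$ per edge; your ``Grover-type diffusions'' are exactly $\refl{A}$, not $R_2$. To simulate $R_2$ one embeds $\H$ into $B(\H)$ and reflects about the $-1$ eigenspace of $U(A,B)=\refl{B}\cdot\refl{A}$ via an \emph{inner} phase estimation. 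That inner phase estimation costs $\tbgo{1/\delta_{A,B}}$ applications of $\refl{A},\refl{B}$, where $\delta_{A,B}$ is the eigenvalue gap of $U(A,B)$ around $-1$; without a bound on $\delta_{A,B}$ you have no time bound at all.

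Proving $\delta_{A,B}=\Omega(1)$ is the main technical lemma here and is not ``routine bookkeeping'': it requires first padding the blocks so that every pair of adjacent blocks is joined by a complete bipartite graph, and then showing that $\Delta=V'V'^{\dagger}$ has the tensor form $A\otimes I_n+\tfrac{1}{n}B\otimes E_n$ with $A,B$ of constant size, whence the nonzero spectrum of $\Delta$ is independent of $n$. The leaf-register coupling you worry about is handled inside this constant-size piece, but only after the tensor decomposition is established. Your proposal skips this step entirely; once you reinstate the inner phase estimation and supply the gap argument, the rest of your outline goes through.
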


\begin{proof}
We use the approach of \cite{spqqc2,spstconnect} for evaluating span programs. This approach relies on an ``effective" spectral gap of a quantum walk operator associated with the span program. Specifically, suppose $\Q$ is an arbitrary span program with input vectors $\ket{v_1},\dots,\ket{v_k} \in \R^d$ and target vector $\ket{\tau} \in  \R^d$. Let $\D \subseteq \zon$, and let $W_1=\ws_1(\Q,\D)$, $W_0=\ws_0(\Q,\D)$, $W=\ws(\Q,\D)$ be the positive, negative and overall witness size of $\Q$ over domain $\D$, respectively. We assume that $W_1, W_0, W,k=\poly{n}$. Pick a constant $C>\mmax{10,1/W}$. Let $\alpha=C\sqrt{W_1}$ and  $\ket{\tilde{\tau}}=\ket{\tau}/\alpha$. Then define
\beq
V \deq \ketbra{\tilde{\tau}}{0}+\sum_{j \in [k]} \ketbra{v_j}{j}.
\eeq
Let $R_{\Lambda}=2\Lambda-I$, where $\Lambda$ is the projection onto $\ker{V}$. Moreover, for any $x \in \D$, let $R_{x}=2\Pi(x)-I$, where $\Pi(x) \deq \sum_{j \in \{0\}\cup I(x)} \ketbra{j}{j}$, where $I(x)$ is the index set of available input vectors on input $x$. The algorithm for evaluating $\Q$ work in the Hilbert space $\H \deq \span{\ket{0},\ket{1},\dots,\ket{k}}$. On input $x$, it starts in $\ket{0}$ and runs phase estimation on $U \deq R_{\Lambda}R_{x}$ with precision $1/(10CW)$ and error rate $1/10$, and it accepts if and only if the measured phase is $0$. This algorithm uses $O(W)$ controlled applications of $U$, and it correctly evaluates $\Q$ on $x$ with probability at least $2/3$. The key component of this algorithm is the implementation of $R_{\Lambda}$ and $R_{x}$. Usually $R_{x}$ can be implemented with few input queries and $\polylog{n}$ local gates. But $R_{\Lambda}$ can be much harder to implement. Let $T_0$ and $T_1$ be the time required to implement $R_x$ and $R_{\Lambda}$ respectively. Then the time complexity of this algorithm is $\tbgo{W (T_0+T_1)}$.

Now we apply this general approach to our span program $\P$ for tree detection. To efficiently implement the reflection $R_{\Lambda}$, we invoke the following lemma:

\begin{lemma}[Spectral lemma \cite{qw}]
Let $A$ and $B$ be complex matrices such that they have the same number of rows and each of them has orthonormal columns. Let $D(A,B) = A^{\dagger}B$, and let $U(A,B) = \refl{B} \cdot \refl{A}$. Then all the singular values of $D(A,B)$ are at most $1$. Let $\cos{\theta_1}$, $\cos{\theta_2}$, $\dots$, $\cos{\theta_l}$ be the singular values of $D(A,B)$ that lie in the open interval $(0,1)$ counted with multiplicity. Then the following is a complete list of the eigenvalues of $U(A,B)$:
\ben
\item The $+1$ eigenspace of $U(A,B)$ is $\lb \Col{A} \cap \Col{B} \rb \oplus \lb \Col{A}^{\bot} \cap \Col{B}^{\bot}\rb$;
\item The $-1$ eigenspace of $U(A,B)$ is $\lb \Col{A} \cap \Col{B}^{\bot} \rb \oplus \lb \Col{A}^{\bot} \cap \Col{B} \rb$;
\item The other eigenvalues of $U(A,B)$ are $e^{2i\theta_1}, e^{-2i\theta_1}, e^{2i\theta_2}, e^{-2i\theta_2},\dots,
e^{2i\theta_l}, e^{-2i\theta_l}$ counted with multiplicity.
\een
\label{lem:spectral}
\end{lemma}

Following the idea of \cite{spstconnect}, we will find two matrices $A$ and $B$ such that: (1) they have the same number of rows; (2) each of them has orthonormal columns; (3) $V' \deq A^{\dagger}B= \frac{1}{\sqrt{4n}}V$. Then, Lemma \ref{lem:spectral} implies that the $-1$ eigenspace of $U(A,B)=\refl{B} \cdot \refl{A}$ is $\lb \Col{A} \cap \Col{B}^{\bot} \rb \oplus \lb \Col{A}^{\bot} \cap \Col{B} \rb$. Note that $\Col{A}^{\bot} \cap \Col{B}=B(\ker{V'})=B(\ker{V})$, and $\Col{A} \cap \Col{B}^{\bot}$ is orthogonal to $\Col{B}$. Thus, to ``effectively" implement $R_{\Lambda}$, we embed $\H$ into $B(\H)$ and treat the $-1$ eigenspace of $U(A,B)$ as $\ker{V}$. That is, we simulate the behavior of $R_{\Lambda}$ on any $\ket{\phi} \in \H$ by the behavior of $R_{-1}$ on $B(\ket{\phi}) \in B(\H)$, where $R_{-1}$ is the reflection about the $-1$ eigenspace of $U(A,B)$. This simulation is valid because $B$ is an isometry. Now, $R_{-1}$ can be (approximately) implemented by running phase estimation on $U(A,B)$ and multiplying the phase by $-1$ if the measured eigenvalue is very close to $-1$. The precision of this phase estimation depends on the eigenvalue gap around $-1$ of $U(A,B)$. Let $T_A$ and $T_B$ be the time required to implement $R_A$ and $R_B$ respectively, and let $\delta_{A,B}$ is the eigenvalue gap around $-1$ of $U(A,B)$. Then $R_{\Lambda}$ can be implemented in time $\tbgo{(T_A+T_B)/\delta_{A,B}}$. 

Next, we will describe how to factorize $V$ into $A$ and $B$. But before doing that, we need to modify our span program $\P$ to make it possess a more uniform structure. Specifically, we modify $G'_c$ and $\P$ as follows: 
\ben

\item We add dummy vertices to each block of $G'_c$, so that each block contains exactly $n$ vertices. Then we fill in the graph with never-available edges between adjacent blocks, so that there is a complete bipartite graph between any two adjacent blocks. 

\item We normalize each input vector to unit length. Specifically, for an input vector $\ket{(u,v)}$, we scale it by a factor of ${1}/{\sqrt{2|S(c(u),c(v))|}}$.

\item We scale the target vector $\ket{\tau}$ by a factor of $1/\sqrt{|V_{T,l}|}$ so that it has length $\sqrt{2}$.

\item We pick a constant $C>\mmax{10,1/W,1/\sqrt{W_1}}$. Let
$\alpha=C\sqrt{W_1}>1$ and
\beq
\ket{\tilde{\tau}}=\frac{1}{\alpha}\lb\ket{s}-\ket{t}\rb\tp \lb\frac{1}{\sqrt{|V_{T,l}|}} \sum_{f \in V_{T,l}}\ket{f}\rb.
\eeq
We add a never-available input vector 
\beq
\ket{\gamma} \deq \sqrt{1-\frac{1}{\alpha^2}}\lb\ket{t}-\ket{s}\rb\tp \lb\frac{1}{\sqrt{|V_{T,l}|}} \sum_{f \in V_{T,l}}\ket{f}\rb.
\eeq
\een

It is easy to check that this modified span program still computes the same function, and its positive and negative witness size remain $\bgo{1}$ and $\bgo{n^2}$ respectively.

Now, since each block of $G'_c$ contains $n$ vertices, we represent the vertices and edges of $G'_c$ as follows. We denote the $k$-th vertex in the block $c^{-1}(x)$ as $(x,k)$. In particular, we denote the original $s$ as $(s,1)$ and denote the original $t$ as $(t,1)$. Then, for the edge between the vertices $(x_1,k_1)$ and $(x_2,k_2)$, we denote it as $(x_1,k_1,x_2,k_2)$. But this leads to a problem. Namely, this edge also has another representation -- $(x_2,k_2,x_1,k_1)$. This could make the implementation of $\refl{A}$ and $\refl{B}$ a little harder. We solve this problem by making two copies of each input vector (except $\ket{\gamma}$), so that one copy corresponds to the representation $(x_1,k_1,x_2,k_2)$ and the other corresponds to the respresentation $(x_2,k_2,x_1,k_1)$. Furthermore, we make $\ket{\tilde{\tau}}$ and $\ket{\gamma}$ correspond to $(s,1,t,1)$ and $(t,1,s,1)$ respectively. 

Now define 
\beq
I \deq \{(x,k,f):x \in V_T \cup \{s,t\}, k \in [n],f \in V^x_{T,l}\}
\eeq
and 
\beq
J \deq \{(x_1,k_1,x_2,k_2): x_1,x_2 \in V_T \cup \{s,t\}, x_1 \sim x_2,~~ k_1,k_2 \in [n]\}.
\eeq
Then for our span program $\P$, we have
\beq
V=\sl_{j\in J}\ketbra{v_j}{j}
\eeq
where 
\beq
\ket{v_j}=\bc
\dfrac{1}{\sqrt{2}}\lb \ket{x,k}-\ket{y,k'} \rb \tp \lb \dfrac{1}{\sqrt{|S_{x,y}|}}
\sl_{f \in S_{x,y}} \ket{f} \rb,&~\mathrm{if}~j=(x,k,y,k')\not\in \{(s,1,t,1), (t,1,s,1)\},\\
\dfrac{1}{\alpha}\lb\ket{s,1}-\ket{t,1}\rb\tp \lb\dfrac{1}{\sqrt{|V_{T,l}|}} \sl_{f \in V_{T,l}}\ket{f}\rb,&~\mathrm{if}~j=(s,1,t,1),\\
\sqrt{1-\dfrac{1}{\alpha^2}}\lb\ket{t,1}-\ket{s,1}\rb\tp \lb\dfrac{1}{\sqrt{|V_{T,l}|}} \sl_{f \in V_{T,l}}\ket{f}\rb,&~\mathrm{if}~j=(t,1,s,1).
\ec
\eeq\\

Now we are ready to give the factorization of $V$ into $A$ and $B$. We will find unit vectors $\{\ket{a_i}: i \in I\}$ and $\{\ket{b_j}: j\in J\}$ such that
\beq
\braket{a_i}{j}\braket{i}{b_j}=\dfrac{1}{\sqrt{4n}}\bra{i}V\ket{j},
\label{eq:aibj}
\eeq
Then we define
\beba
A=\sl_{i \in I}(\ket{i}\tp \ket{a_i})\bra{i},\\
B=\sl_{j \in J} (\ket{b_j}\tp \ket{j})\bra{j}.
\eeea
It follows immediately that
\beq
V'\deq A^{\dagger}B=\dfrac{V}{\sqrt{4n}},
\eeq
as desired. The $\ket{a_i}$'s are defined as follows:
\bit
\item For $i=(x,k,f)$, where $x\not\in \{s,t\}$ or $k \neq 1$, let 
\beba
\ket{a_i} &=& \dfrac{1}{\sqrt{4n}}\sl_{y \in M(x,f)} \sl_{k'\in [n]}\lb \ket{x,k,y,k'}+\ket{y,k',x,k}\rb. 
\label{eq:ai1}
\eeea
Note that since $|M(x,f)|=2$, $\ket{a_i}$ is indeed a unit vector. 
\item For $i=(s,1,f)$, let 
\beba
\ket{a_i} &= &\dfrac{1}{\sqrt{4n}}\lb\sl_{k \in [n]}(\ket{s,1,r,k}+\ket{r,k,s,1}) + \sl_{k=2}^n (\ket{s,1,t,k}+ \ket{t,k,s,1})\rb \\
&&+\dfrac{1}{\sqrt{2n}}\lb
\dfrac{1}{\alpha}\ket{s,1,t,1}+\sqrt{1-\dfrac{1}{\alpha^2}}\ket{t,1,s,1}\rb.
\label{eq:ai2}
\eeea
\item 
For $i=(t,1,f)$, let
\beba
\ket{a_i} &= &\dfrac{1}{\sqrt{4n}}\lb \sl_{k \in [n]}(\ket{t,1,f,k}+\ket{f,k,t,1})
+\sl_{k=2}^n (\ket{t,1,s,k}+\ket{s,k,t,1})\rb \\
&&+\dfrac{1}{\sqrt{2n}}\lb
\dfrac{1}{\alpha}\ket{s,1,t,1}+\sqrt{1-\dfrac{1}{\alpha^2}}\ket{t,1,s,1}\rb.
\label{eq:ai3}
\eeea
\eit
The $\ket{b_j}$'s are defined as follows:
\bit
\item For $j=(x_1,k_1,x_2,k_2)$, let 
\beq
\ket{b_j} = \dfrac{1}{\sqrt{2|S_{x_1,x_2}|}} \sl_{f\in S_{x_1,x_2}}\lb \ket{x_1,k_1,f}-\ket{x_2,k_2,f}\rb .
\label{eq:bj}
\eeq
\eit
It is easy to check that these $\ket{a_i}$'s and $\ket{b_j}$'s are indeed unit vectors and they satisfy Eq.(\ref{eq:aibj}), as promised.

Now we describe the implementation of $R_x$ and $R_{\Lambda}$. We embed the space $\H=\span{\ket{j}:~j \in J}$ into $B(\H)=\span{\ket{b_j}\ket{j}:~j \in J}$. To implement $R_x$, we detect $j$ in the second register and multiply the phase by $-1$ if $\ket{v_j}$ is an unavailable input vector on $x$ (this can be tested by querying the edge labelled by $j$). To implement $R_{\Lambda}$, we run phase estimation on the quantum walk operator $U(A,B)=\refl{B} \cdot \refl{A}$ with precision $\delta_{A,B}/3$ and multiply the phase by $-1$ if the measured eigenvalue is $(\delta_{A,B}/3)$-close to $-1$. It remains to analyze the spectral gap $\delta_{A,B}$ around $-1$ of $U(A,B)$ and give explicit implementation of $\refl{A}$ and $\refl{B}$.

\begin{lemma}
The eigenvalue gap $\delta_{A,B}$ around $-1$ of $U(A,B)=\refl{B} \cdot \refl{A}$ is $\omg{1}$.
\label{lem:gap}
\end{lemma}
\begin{proof}
By Lemma \ref{lem:spectral}, it is sufficient to show that the singular value gap around $0$ of $V'=A^{\dagger}B$ is $\omg{1}$. To prove this, it is sufficient to show that the smallest non-zero eigenvalue of 
\beq
\Delta\deq {V'}{V'}^{\dagger}=\dfrac{1}{4n}\sl_{j\in J} \ketbra{v_j}{v_j}
\eeq 
is $\Omega(1)$. Let $F=\{(x,y):~x,y \in V_T \cup \{s,t\}, ~x  \sim y \}$, and let $F(x,y)=\{(x,k,y,k'):~k,k \in [n]\}$ for any $(x,y) \in F$. Then we can write $\Delta$ as 
\beq
\Delta=\sl_{(x,y) \in F} \Delta_{x,y}
\eeq
where
\beq
\Delta_{x,y} \deq \frac{1}{4n} \sum_{j \in F(x,y)} \ketbra{v_j}{v_j}
\eeq
Now, for any $j=(x,k,y,k')$, where $(x,y) \not\in \{(s,t),(t,s)\}$, we have 
\beba
\ketbra{v_j}{v_j}
&=&\dfrac{1}{2|S_{x,y}|} \sl_{f,f' \in S_{x,y}} \lb \ketbra{x,k,f}{x,k,f'}+\ketbra{y,k,f}{y,k,f'}\rb \\
&&-\dfrac{1}{2|S_{x,y}|} \sl_{f,f' \in S_{x,y}} \lb \ketbra{x,k,f}{y,k',f'}+\ketbra{y,k',f'}{x,k,f}\rb.
\label{eq:vjvj}
\eeea
Taking the sum of Eq.(\ref{eq:vjvj}) over $k,k' \in [n]$ yields
\beq
\Delta_{x,y}=A_{x,y} \otimes I_n + \dfrac{1}{n}B_{x,y} \otimes E_n,
\label{eq:deltaxy}
\eeq
where $I_n=\sum_{k \in [n]} \ketbra{k}{k}$ and $E_n=\sum_{k,k' \in [n]} \ketbra{k}{k'}$, and $A_{x,y}$ and $B_{x,y}$ are some matrices independent of $n$ (here we have switched the order of $k$-register and $f$-register). This holds for any $(x,y) \not\in \{(s,t),(t,s)\}$. 

On the other hand, we also have
\beq
\Delta_{s,t}+\Delta_{t,s}=\bar{A} \tp I_n + \dfrac{1}{n}\bar{B} \tp E_n,
\label{eq:deltastts}
\eeq
where $\bar{A}$ and $\bar{B}$ are some matrices independent of $n$. This can be derived from the fact that for any $j=(s,k,t,k')$ or $(t,k',s,k)$, where $(k,k') \neq (1,1)$, 
\beba
\ketbra{v_j}{v_j}
&=&\dfrac{1}{2|V_{T,l}|} \sl_{f,f' \in V_{T,l}} \lb \ketbra{s,k,f}{s,k,f'}+\ketbra{t,k,f}{t,k,f'}\rb \\
&&-\dfrac{1}{2|V_{T,l}|} \sl_{f,f' \in V_{T,l}} \lb \ketbra{s,k,f}{t,k',f'}+\ketbra{t,k',f'}{s,k,f}\rb
\eeea
and the fact that
\beba
\ketbra{v_{(s,1,t,1)}}{v_{(s,1,t,1)}}+\ketbra{v_{(t,1,s,1)}}{v_{(t,1,s,1)}}
&=&\dfrac{1}{|V_{T,l}|} \sl_{f,f' \in V_{T,l}} \lb \ketbra{s,1,f}{s,1,f'}+\ketbra{t,1,f}{t,1,f'}\rb \\
&&-\dfrac{1}{|V_{T,l}|} \sl_{f,f' \in V_{T,l}} \lb \ketbra{s,1,f}{t,1,f'}+\ketbra{t,1,f'}{s,1,f}\rb.
\eeea

Now by Eqs.(\ref{eq:deltaxy}) and (\ref{eq:deltastts}) we obtain
\beba
\Delta=A \otimes I_n+\dfrac{1}{n}B \otimes E_n,
\eeea
where $A$ and $B$ are some matrices independent of $n$. Then, some simple linear algebra shows that the spectrum of $\Delta$ (i.e. the set of eigenvalues sans multiplicity) does not depend on $n$. In particular, the smallest non-zero eigenvalue of $\Delta$ is $\omg{1}$, as desired.
\end{proof} 

\begin{lemma}
$\refl{A}$ can be implemented in $\polylog{n}$ time.
\label{lem:refla}
\end{lemma}
\begin{proof}
Since $\refl{A}$ is the reflection about the span of $\ket{i}\tp\ket{a_i}$'s, we can implement it as $U_AO_AU_A^{\dagger}$, where $U_A$ is any unitary operation that transforms $\ket{i}\tp\ket{\bar{0}}$ to $\ket{i}\tp \ket{a_i}$, and $O_A$ is the reflection about $\ket{\bar{0}}$ on the second subsystem. Obviously, $O_A$ can be implemented in $\polylog{n}$ time. So it remains to show that $U_A$ can be implemented in $\polylog{n}$ time. 

Observe that $\ket{a_i}$ can be written as:
\beq
\ket{a_{(x,k,f)}}=
\dfrac{1}{\sqrt{4n}}
\sl_{y \in M(x,f)} \sl_{k' \in [n]} Q\lb \ket{x,k,y,k'}+\ket{y,k',x,k}\rb,
\eeq
where $Q$ is a unitary operation acting on the span of $\ket{x,k,y,k'}$'s such that
\beba
Q\ket{x,k,y,k'}=\ket{x,k,y,k'},~~~~~~~\forall~(x,k,y,k)\not\in \{(s,1,t,1),(t,1,s,1)\};\\
Q\lb \dfrac{1}{\sqrt{2}}\ket{s,1,t,1}+\dfrac{1}{\sqrt{2}}\ket{t,1,s,1} \rb
=\dfrac{1}{\alpha}\ket{s,1,t,1}+\sqrt{1-\dfrac{1}{\alpha^2}}\ket{t,1,s,1}.
\eeea
So we can generate $\ket{i} \tp \ket{a_i}$ from $\ket{i}\tp \ket{\bar{0}}$ as follows:
\beba
\ket{x,k,f}\tp \ket{0,0,0,0}
&\to&
\ket{x,k,f}\tp \ket{x,k,0,0} \\
&\to&
\ket{x,k,f}\tp \dfrac{1}{\sqrt{n}} \sl_{k' \in [n]} \ket{x,k,0,k'} \\
&\to&
\ket{x,k,f}\tp \dfrac{1}{\sqrt{2n}} \sl_{y \in M(x,f)} \sl_{k' \in [n]} \ket{x,k,y,k'} \\
&\to&
\ket{x,k,f}\tp \dfrac{1}{\sqrt{4n}}  \sl_{y \in M(x,f)} \sl_{k' \in [n]} \lb \ket{x,k,y,k'}+\ket{y,k',x,k}\rb \\
&\to &
\ket{x,k,f}\tp \dfrac{1}{\sqrt{4n}}
 \sl_{y \in M(x,f)} \sl_{k' \in [n]} Q\lb \ket{x,k,y,k'}+\ket{y,k',x,k}\rb \\
&=&\ket{x,k,f} \tp \ket{a_{(x,k,f)}},
\eeea
where 
\bit
\item The first step is accomplished by performing a unitary operation $U_{A,1}$ that
transforms $\ket{x,k,0,0}$ to $\ket{x,k,x,k}$ on the first, second, fourth and fifth registers;
\item The second step is accomplished by performing a unitary operation $U_{A,2}$ that
transforms $\ket{0}$ to $\frac{1}{\sqrt{n}}\sum_{k' \in [n]}\ket{k'}$ on the last register;
\item The third step is accomplished by performing a unitary operation $U_{A,3}$ that
transforms $\ket{x,f} \tp \ket{0}$ to $\ket{x,f} \tp \frac{1}{\sqrt{2}}\sum_{y \in M(x,f)} \ket{y}$ on the first, third and sixth registers;
\item The fourth step is accomplished by performing a unitary operation $U_{A,4}$ that
transforms $\ket{x,k} \tp \ket{x,k,y,k'}$ to
$\ket{x,k} \tp \frac{1}{\sqrt{2}}(\ket{x,k,y,k'}+\ket{y,k',x,k})$ on all but the third registers,
\item The last step is accomplished by performing $Q$ on the last four registers.
\eit
Formally, let $U_A = Q U_{A,4}U_{A,3}U_{A,2}U_{A,1}$. Clearly, all of $U_{A,1}$, $U_{A,2}$, $U_{A,3}$, $U_{A,4}$ and $Q$ can be implemented in $\polylog{n}$ time, and hence so is $U_A$.
\end{proof}

\begin{lemma}
$\refl{B}$ can be implemented in $\polylog{n}$ time.
\label{lem:reflb}
\end{lemma}
\begin{proof}
Since $\refl{B}$ is the reflection about the span of $\ket{b_j}\tp\ket{a_j}$'s, we can implement it as $U_BO_BU_B^{\dagger}$, where $U_B$ is any unitary operation that transforms $\ket{\bar{0}} \tp \ket{j}$ to $\ket{b_j}\tp \ket{j}$, and $O_B$ is the reflection about $\ket{\bar{0}}$ on the first subsystem. Obviously, $O_B$ can be implemented in $\polylog{n}$ time. So it remains to show that $U_B$ can be implemented in $\polylog{n}$ time.

We generate $\ket{b_j}\tp\ket{j}$ from $\ket{\bar{0}}\tp\ket{j}$ as follows:
\beba
\ket{0,0,0}\tp \ket{x_1,k_1,x_2,k_2}
&\to &  \ket{0,0} \tp
\lb \dfrac{1}{\sqrt{|S_{x_1,x_2}|}}\sum\limits_{f \in S_{x_1,x_2}} \ket{f} \rb
\tp 
\ket{x_1,k_1,x_2,k_2}
\\
&\to & \dfrac{1}{\sqrt{2}}(\ket{x_1,k_1}-\ket{x_2,k_2})
\tp
\lb \dfrac{1}{\sqrt{|S_{x_1,x_2}|}}\sum\limits_{f \in S_{x_1,x_2}} \ket{f} \rb
\tp 
\ket{x_1,k_1,x_2,k_2}\\
&=&\ket{b_{(x_1,k_1,x_2,k_2)}} \tp \ket{x_1,k_1,x_2,k_2}.
\eeea
where
\bit
\item The first step is accomplished by performing a unitary operation $U_{B,1}$  that transforms $\ket{0} \tp \ket{x_1,x_2}$ to 
$\lb \frac{1}{\sqrt{|S_{x_1,x_2}|}} \sum_{f \in S_{x_1,x_2}} \ket{f} \rb \tp \ket{x_1,x_2}$ on the third, fourth and sixth registers.
\item The second step is accomplished by performing a unitary operation $U_{B,2}$  that transforms
$\ket{0,0} \tp \ket{x_1,k_1,x_2,k_2}$ to $\frac{1}{\sqrt{2}}(\ket{x_1,k_1}-\ket{x_2,k_2})\tp \ket{x_1,k_1,x_2,k_2}$  on the first two and last four registers.
\eit
Formally, let $U_B=U_{B,2}U_{B,1}$. Clearly, both $U_{B,1}$ and $U_{B,2}$ can be implemented in $\polylog{n}$ time, and hence so is $U_B$.
\end{proof}

Combining Lemma \ref{lem:gap}, Lemma \ref{lem:refla} and Lemma \ref{lem:reflb}, we know that $R_{\Lambda}$ can be implemented in $\polylog{n}$ time. Since $R_x$ can be also implemented in $\polylog{n}$ time, $U=R_{\Lambda}R_x$ can be implemented in $\polylog{n}$ time. Then, since $\P$ has witness size $\bgo{n}$, the time complexity of our algorithm is $\tbgo{n}$, as claimed.

\end{proof}

\section{Conclusion and open problems}
\label{sec:con}

To summarize, we have presented a time-efficient span-program-based quantum algorithm for the subgraph/not-a-minor problem for any tree. This algorithm has $\bgo{n}$ query complexity, which is optimal, and $\tbgo{n}$ time complexity, which is tight up to polylog factors.

Our work raises many interesting questions:
\bit
\item As mentioned in Section \ref{sec:sp}, our span program $\P$ accepts \emph{some} graphs that contain $T$ as a minor but not as a subgraph. Thus, it fails to solve the $T$-containment problem. However, it is not true that $\P$ accepts \emph{every} graph containing $T$ as a minor. One can see that a graph must possess certain structure in order to be accepted by $\P$. But this structure is not easy to characterize. It would be interesting to know exactly what kind of graphs are accepted by $\P$. If we have a better understanding of this matter, we might be able to modify $\P$ to make it reject any graph that does not contain $T$ as a subgraph, thus solving the $T$-containment problem.

\item In this paper, we have focused on the detection of trees. It is worth studying the detection of more complicated patterns, such as cycles and cliques. In particular, can we design time-efficient span-program-based quantum algorithms for those problems as well?

\item Perhaps the most interesting direction is to investigate the potential of our ``parallel-flow" technique for designing span programs.
In particular, can we use this technique to improve learning graphs \cite{lg}?
The basic idea of learning graph is to run a single flow from a vertex (i.e. the empty set) to some vertices (i.e. those contain a $1$-certificate) on an exponentially large graph, and its efficiency is determined by the energy of this flow. We observe that many decision problems can be decomposed into several \emph{correlated} subproblems, so that an input is a positive instance of the original problem if and only if it is a positive instance of all the subproblems simultaneously. Perhaps we can enhance the efficiency of learning graphs by dividing the original flow into several parallel flows, so that each flow corresponds to one subproblem. We might also need to make these flows interfere somehow in order to respect the correlations among these subproblems. It is possible that the total energy of these flows might be smaller than that of the original flow. Can we formalize this idea and use it to improve previous learning-graph-based quantum algorithms?

\eit

\section*{Acknowledgement}
We thank Ben Reichardt and Umesh Vazirani for helpful discussion on this topic. This research was supported by NSF Grant CCR-0905626 and ARO Grant W911NF-09-1-0440.

\begin{appendix}

\section{Proof of Claim \ref{clm:structure}}
\label{apd:pfclm}
For convenience, we introduce the following notation. Let $H=(V_H,E_H)$ be an arbitrary graph. For any $u \in V_H$, we say that $(H,u)$ is {\emph {good}} with respect to $T$ if there exists a subgraph $H'=(V_{H'},E_{H'})$ of $H$ and $\ecs \subseteq E_{H'}$ such that $\ecs(H')$ is isomorphic to $T$ and the root of $\ecs(H')$ contains $u$ (and thus $u$ must be involved in $\ecs$).

We will prove the following claim (which is stronger than Claim \ref{clm:structure}):
\begin{claim}
For any graph $G=(V_G,E_G)$ and coloring $c:V_G \to V_T$, there exist $L \in \N$, $W \subseteq c^{-1}(r)$, $\{V_{a,l} \subseteq c^{-1}(a): a \in V_{T,i}, l \in [L]\}$ and $\{V_{a,b,l} \subseteq c^{-1}(a): a \in V_{T,i}, b \in C(a), l \in [L]\}$, such that:
\ben
\item 
$\forall a \in V_{T,i}$, $\forall l \in [L]$, $V_{a,l}$ is the disjoint union of $V_{a,b,l}$'s for $b \in C(a)$; 
\item 
$c^{-1}(r)$ is the disjoint union of $W$ and $V_{r,l}$'s for $l \in [L]$;
\item 
$\forall a \in V_{T,i}$, $\forall b \in C(a) \cap V_{T,i}$, $\forall l \in [L]$, 
$N_{a \to b}(V_{a,b,l}) \subseteq V_{b,l}$
and
$N_{b \to a}(V_{b,l}) \subseteq V_{a,b,l}$;
\item
$\forall a \in V_{T,i}$, $\forall b \in C(a) \cap V_{T,l}$, $\forall l \in [L]$, $N_{a \to b}(V_{a,b,l})=\varnothing$;
\item 
Let $U_l = \cup_{a \in V_{T,i}} V_{a,l}$, $\forall l \in [L]$, and let $U = \cup_{l \in [L]} U_l$. Then $U_1,U_2,\dots,U_L$ are the vertex sets of the connected components of $G_c|_{U}$;
\item 
Let $Z = V_G \setminus U$. Then $\forall w \in W$, $(G_c|_Z,w)$ is good with respect to $T$.
\een
\label{clm:structurestrong}
\end{claim}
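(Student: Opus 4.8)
The plan is to prove Claim~\ref{clm:structurestrong} by induction on $|V_T|$, building the decomposition bottom-up and handling the root block separately at each stage. For the base case $|V_T|=1$ there are no internal nodes, so I would take $L=0$ and $W=c^{-1}(r)$: conditions $1$--$5$ are then vacuous, and condition $6$ holds because $(G_c,w)$ is trivially good with respect to a one-node tree for every $w$. For the inductive step, let $r$ have children $b_1,\dots,b_k$. Since $G^{b_i}_c$ is $G_c$ restricted to the blocks of $T^{b_i}$, applying the induction hypothesis to $(G^{b_i}_c,T^{b_i})$ supplies integers $L_i$, sets $W^{(i)}\subseteq c^{-1}(b_i)$, decompositions $\{V^{(i)}_{a,l}\}$ and $\{V^{(i)}_{a,b,l}\}$ for $a\in V^{b_i}_{T,i}$, and a trapped region $U^{(i)}=\bigcup_{l}U^{(i)}_l$ with complement $Z^{(i)}$, all satisfying the conclusion for $T^{b_i}$.

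The core of the argument is to classify each $u\in c^{-1}(r)$ as \emph{good} (destined for $W$) or \emph{trapped} (destined for some $V_{r,l}$). Naively one would call $u$ good iff it has a neighbor in $W^{(i)}$ for every $i$, but the parallel-flow phenomenon of Section~\ref{sec:sp} shows that $u$ may root $T$ by \emph{contracting} a connected subgraph that detours through \emph{spare} vertices of the $b_i$-subtrees; so the correct notion of ``trapped'' is a reachability condition, which I would obtain as a fixed point: call $u$ trapped in direction $b_i$ if it cannot reach a good $b_i$-root even allowing such detours within the not-yet-trapped part of the graph; put such a $u$ into $V_{r,b_i,l}$ whenever a vertex of $U^{(i)}_l$ is adjacent to it, and in turn absorb into the $b_i$-side any vertex whose escape routes have all become blocked, iterating until the situation stabilizes. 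I would then set $V_{r,l}=\bigcup_i V_{r,b_i,l}$, declare the new components $U_1,\dots,U_L$ to be the connected components of $G_c$ restricted to the union of all trapped vertices (components lying in different children's subtrees may fuse through a trapped root vertex), reindex the inherited $V^{(i)}_{a,l}$ accordingly, and put $W=c^{-1}(r)\setminus\bigcup_l V_{r,l}$. With this construction conditions $1$ and $2$ are immediate; conditions $3$ and $4$ should hold because the fixed-point closure is exactly what makes the trapped set edge-closed between adjacent blocks (a vertex of $V_{r,b_i,l}$ sends no edge outside $V_{b_i,l}$, and conversely); and condition $5$ holds by construction.

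The remaining and hardest part is condition $6$: for every $w\in W$ one must produce a minor-witness of $T$ lying entirely in $Z=V_G\setminus U$. I would argue this by minimality --- take a witness for $w$ minimizing its overlap with $U$ and show the overlap is empty --- using that $w$ is not trapped in any direction, so its root branch set can be extended greedily into each $b_i$-subtree via a good $b_i$-root together with the induction-guaranteed $Z^{(i)}$-witness, with detours only through non-trapped spare vertices. The main obstacle throughout is to show that the contraction-based routing is \emph{faithfully} captured by the local edge-closed conditions $3$--$4$: that a non-trapped vertex genuinely has a witness avoiding all of $U$, and that no routing through deeper blocks can rescue a vertex the closure has declared trapped. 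Handling this requires a careful structural analysis of minor-witnesses in color-coded graphs --- exploiting that $G_c$ has edges only between adjacent blocks, so every branch set projects onto a subtree of $T$ --- and that is where essentially all the content of the proof will reside.
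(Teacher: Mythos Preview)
Your inductive framework matches the paper's, but you miss a simplification that also closes the gap you leave open. The paper avoids any fixed-point iteration: after applying the induction hypothesis to each $T^{b_i}$ to obtain $W^{(i)}$ and $U^{(i)}$, it forms $H = G_c|_{c^{-1}(r)\,\cup\,\bigcup_i U^{(i)}}$ and classifies each connected component $A_p$ of $H$ in one shot. If for every $i$ some root-vertex in $A_p$ has a neighbor in $W^{(i)}$, then $A_p\cap c^{-1}(r)$ goes into $W$; otherwise one chooses a \emph{single} failing direction $i$ and, for each connected component $B$ of the smaller graph $H^{(i)}=G_c|_{c^{-1}(r)\,\cup\,U^{(i)}}$ lying inside $A_p$, creates one new label $l$ with $V_{r,l}=V_{r,b_i,l}=B\cap c^{-1}(r)$, $V_{r,b_{i'},l}=\varnothing$ for $i'\neq i$, and deeper $V_{a,l}$'s obtained by merging those $V^{(i)}_{a,\beta}$ with $U^{(i)}_\beta\subseteq B$. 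Your rule ``put $u$ into $V_{r,b_i,l}$ whenever a vertex of $U^{(i)}_l$ is adjacent to it'' can place the same $u$ in several such sets (for different $i$ or $l$), violating condition~1, and the iterative absorption you sketch gives no mechanism to break ties. The single-direction-per-component choice is precisely what makes conditions 1--5 fall out mechanically: condition~3 across $(r,b_i)$ holds because $B$ is edge-closed in $H^{(i)}$ and because the failure of direction $i$ forces every $c^{-1}(b_i)$-neighbor of $B\cap c^{-1}(r)$ into $U^{(i)}$, hence into $B$.

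You also overestimate condition~6. In the paper's construction a good component $A_p$ is automatically disjoint from the new $U$ (only bad components ever contribute vertices to $U$), so one simply contracts all of $A_p$ to serve as the root branch set; by goodness some root-vertex of $A_p$ is adjacent to a vertex of $W^{(i)}$ for each $i$, and the induction hypothesis supplies a $T^{b_i}$-witness inside $Z^{(i)}\subseteq Z$ rooted there. The witness is assembled directly from the component structure --- no minimality or rerouting argument is needed, and this is not where the content of the proof lies.
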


Let us first show that Claim \ref{clm:structurestrong} indeed implies Claim \ref{clm:structure}.
Suppose $G_c$ does not contain $T$ as a minor. By applying Claim \ref{clm:structurestrong}, we obtain the $W$, $V_{a,l}$'s and $V_{a,b,l}$'s satisfying the above conditions. Now define $V_a = \cup_{l \in [L]} V_{a,l}$ and $V_{a,b} = \cup_{l \in [L]} V_{a,b,l}$, for any $a \in V_{T,i}$ and $b \in C(a)$. We claim that these $V_a$'s and $V_{a,b}$'s satisfy all the conditions of Claim \ref{clm:structure}. This is because:
\bit
\item
Since the $V_{a,l}$'s and $V_{a,b,l}$'s satisfy conditions 1, 3 and 4 of Claim \ref{clm:structurestrong}, and the $V_a$'s or $V_{a,b}$'s are simply the union of the $V_{a,l}$'s or $V_{a,b,l}$'s respectively, it is obvious that the $V_a$'s and $V_{a,b}$'s satisfy
conditions 1, 3 and 4 of Claim \ref{clm:structure};
\item
Since $G_c$ does not contain $T$ as a minor, by condition 4 of Claim \ref{clm:structurestrong}, we must have $W=\varnothing$. Then, by condition 2 of Claim \ref{clm:structurestrong} and $V_r=\cup_{l \in [L]}V_{r,l}$, we have $V_r=c^{-1}(r)$. So condition 2 of Claim \ref{clm:structure} is also fulfilled.
\eit

Now it remains to prove Claim \ref{clm:structurestrong}. 
\begin{proof}[Proof of Claim \ref{clm:structurestrong}]
Proof by induction on the depth of $T$. 
\ben
\item \textbf{Basis}: Suppose $T$ has depth $0$. Namely, $T$ contains only a root node $r$. Then we simply let $W=c^{-1}(r)=V_G$ and $L=0$ (or let $V_{r,l}=\varnothing$ for any $l$). Then all the conditions of Claim \ref{clm:structurestrong} are trivially satisfied.

Suppose $T$ has depth $1$. Namely, $T$ contains a root node $r$ and its children $f_1,f_2,\dots,f_k$ (which are the leaves of $T$). We build the desired   
$W$, $V_{a,l}$'s and $V_{a,b,l}$'s as follows:
\bit
\item
Initially, set $W \la \varnothing$ and $L \la 0$. 
\item 
For each vertex $u \in c^{-1}(r)$, do: 
\ben
\item If $N_{r \to f_j}(u) \neq \varnothing$ for every $j \in [k]$, then set $W \leftarrow W \cup \{u\}$; 
\item Otherwise, there exists some $j \in [k]$ such that $N_{r \to f_{j}}(u)=\varnothing$. Then:
\bit
\item Set $L \la L+1$;
\item Set $V_{r,L} \la \{u\}$ and $V_{r,f_j,L} \la \{u\}$;
\item Set $V_{r,f_{j'},L} \la \varnothing$ for any $j' \neq j$.
\eit
\een
\eit

Now we show that the $W$, $V_{a,l}$'s and $V_{a,b,l}$'s obtained by this algorithm satisfy all the conditions of Claim \ref{clm:structurestrong}: 
\bit
\item
By the construction, it is obvious that $V_{r,l}$ is the disjoint union of
$V_{r,f_j,l}$'s for $j \in [k]$, for any $l$. So condition 1 is fulfilled;
\item
Since each vertex in $c^{-1}(r)$ is put into either $W$
or some $V_{r,l}$, we know that $c^{-1}(r)$ is the disjoint union of
$W$ and the $V_{r,l}$'s. So condition 2 is also fulfilled;
\item 
Since there is only one internal node $r$ and all of its children are leaves, the situation described by condition 3 does not exist and hence condition 3 is automatically satisfied;
\item  
By the definition of $V_{r,l}$'s and $V_{r,f_j,l}$'s, we have $N_{r \to f_{j}}(V_{r,f_j,l})=\varnothing$. So condition 4 is also satisfied;
\item 
Since each $U_l$ contains only one vertex, $G_c|_U$ is simply a collection of isolated vertices, and hence $U_1,U_2,\dots,U_L$ are the vertex sets of the connected components of $G_c|_U$. Therefore, condition 5 is fulfilled;
\item 
For any $w \in W$, it has a neighbor in each of $c^{-1}(f_1),c^{-1}(f_2),\dots,c^{-1}(f_k)$, and hence $G_c|_Z$ contains a tree that is isomorphic to $T$ and its root is $w$. So $(G_c|_Z, w)$ is good with respect to $T$. Thus, condition 6 is fulfilled.
\eit

\item \textbf{Inductive step}: Suppose that Claim \ref{clm:structurestrong} holds for any graph $G'$ and coloring $c'$ with respect to any tree $T'$ of depth at most $d$. 

Let $T$ be a tree of depth $d+1$. Suppose its root $r$ has $k$ children $d_1,d_2,\dots,d_k$. For each $j \in [k]$, since $T^{d_j}$ has depth at most $d$, we can apply Claim \ref{clm:structurestrong} to $G^{d_j}_{c}$ with respect to $T^{d_j}$, and obtain the $W^j$, $V^j_{a,\beta}$'s and $V^j_{a,b,\beta}$'s (where $a \in V^{d_j}_{T,i}$, $b \in C(a)$ and $\beta \in [L_j]$ for some $L_j$) satisfying the conditions of Claim \ref{clm:structurestrong}. In particular, let $U^j_{\beta}=\cup_{a \in V^{d_j}_{T,i}} V^j_{a,\beta}$ for $\beta \in [L_j]$, and let $U^j=\cup_{\beta \in [L_j]} U^j_\beta$, for $j \in [k]$. Then $U^j_1, U^j_2,\dots,U^j_{L_j}$ are the vertex sets of the connected components of $G_c|_{U^j}$ (which is a subgraph of $G^{d_j}_c$), for any $j \in [k]$. Let $Q^j = c^{-1}(r) \cup  U^j$ for $j \in [k]$, and let $Q = 
 \cup_{j \in [k]} Q^j $. Then let $H^j = G_c|_{Q^j}$, and let $H = G_c|_Q$. Note that each $H^j$ is a subgraph of $H$. 

Now consider the connected components of $H$ and $H^j$'s.
Suppose the vertex sets of the connected components of $H$ are $A_1, A_2, \dots,A_m$, and the vertex sets of the connected components of $H^j$ are $B^j_1,B^j_2,\dots,B^j_{m_j}$ for some $m_j$, for $j \in [k]$. Note that each $A_i$ is the union of some $B^j_t$'s (for different $(j,t)$'s), while each $B^j_t$ is the union of several $U^j_{\beta}$'s (for different $\beta$'s) and some subset of $c^{-1}(r)$. Let $E_{i,j} = \{t \in [m_j]: B^j_t \subseteq A_i\}$, and let $F_{j,t} = \{\beta \in [L_j]: U^j_{\beta} \subseteq B^j_t\}$, for $i \in [m]$, $j \in [k]$ and $t \in [m_j]$. Note that $A_i \cap c^{-1}(r) = \bigcup_{t \in E_{i,j}} \lb B^j_t \cap c^{-1}(r)\rb$ for any $i,j$.

We build the desired $W$, $V_{a,l}$'s and $V_{a,b,l}$'s as follows:
\bit
\item
Initially, set $W \la \varnothing$ and $L \la 0$. 
\item
For $i:=1$ to $m$ do:
\ben
\item If $N_{r \to d_j} (A_i \cap c^{-1}(r)) \cap W^j \neq \varnothing$ for every $j \in [k]$, then set $W \la W \cup (A_i \cap c^{-1}(r))$; 
\item Otherwise, there exists some $j \in [k]$ such that
$N_{r \to d_j} (A_i \cap c^{-1}(r)) \cap W^j=\varnothing$. Then:
\ben
\item[] For each $t \in E_{i,j}$ do:
\bit
\item Set $L \la L+1$;
\item Set $V_{r,L} \la B^j_t \cap c^{-1}(r)$ and 
$V_{r,d_j,L} \la B^j_t \cap c^{-1}(r)$;
\item Set $V_{a,L} \la \cup_{\beta \in F_{j,t}} V^j_{a,\beta}$ 
and
$V_{a,b,L} \la \cup_{\beta \in F_{j,t}} V^j_{a,b,\beta}$, for any $a \in V^{d_j}_{T,i}$ and $b \in C(a)$;
\item Set any other $V_{a,L}$ or $V_{a,b,L}$ to be $\varnothing$.
\eit
\een
\een
\eit

Now we show that the $W$, $V_{a,l}$'s and $V_{a,b,l}$'s obtained by the above algorithm satisfy all the conditions of Claim \ref{clm:structurestrong}.

\bit
\item
To prove that the $V_{a,l}$'s and $V_{a,b,l}$'s satisfy conditions 1, 3 and 4, we consider two possible cases separately:
\bit
\item $a \neq r:$
Since the $V^j_{a,\beta}$'s and $V^j_{a,b,\beta}$'s satisfy the conditions 1, 3 and 4 for each $j \in [k]$ (by the inductive hypothesis), and the $V_{a,l}$'s or $V_{a,b,l}$'s are simply the union of several $V^j_{a,\beta}$'s or $V^j_{a,b,\beta}$'s (for consistent choice of $j$'s and $\beta$'s) respectively, it is easy to see that the $V_{a,l}$'s and $V_{a,b,l}$'s also satisfy conditions 1, 3 and 4.
\item $a=r:$ 
\ben
\item By construction, for any $l$, exactly one of the $V_{r,d_j,l}$'s (for $j \in [k]$) equals $V_{r,l}$, and the other $V_{r,d_j,l}$'s are all empty.
So $V_{r,l}$ is indeed the disjoint union of the $V_{r,d_j,l}$'s (for $j \in [k]$). Hence, condition 1 is satisfied. 

\item For any internal node $d_j$, for any $l$, we have three possible cases:
\bit 
\item $V_{r,d_j,l}=V_{d_j,l}=\varnothing$; 
\item $V_{r,d_j,l}=B^j_t \cap c^{-1}(r)$ and $V_{d_j,l}=\cup_{\beta \in F_{j,t}} V^j_{d_j,\beta}$ for some $t \in [m_j]$;
\eit

For the first case, we have $N_{r \to d_j}(V_{r,d_j,l})=V_{d_j,l}=\varnothing$ and $N_{d_j \to r}(V_{d_j,l})=V_{r,d_j,l}=\varnothing$. For the second case, since $B^j_t$ is a connected components of $H^j$, we also have $N_{r \to d_j}(V_{r,d_j,l}) \subseteq V_{d_j,l}$ and $N_{d_j \to r}(V_{d_j,l}) \subseteq V_{r,d_j,l}$. 
Therefore, condition 3 is fulfilled.

\item For any leaf $d_j$, we have $W^j=c^{-1}(d_j)$. So if $N_{r \to d_j}(A_i \cap c^{-1}(r))\cap W^j=\varnothing$, then $N_{r \to d_j}(A_i \cap c^{-1}(r))=\varnothing$. This implies that
$N_{r \to d_j}(V_{r,d_j,l})=\varnothing$ for any $l$. Hence, condition 4 is also satisfied.
\een

\eit

\item
Since $c^{-1}(r) = \bigcup_{i \in [m]} (A_i \cap c^{-1}(r))$ and
$A_i \cap c^{-1}(r) = \bigcup_{t \in E_{i,j}} \lb B^j_t \cap c^{-1}(r)\rb$ for any $i,j$, by the construction of $W$ and the $V_{r,l}$'s, we know that they form a partition of $c^{-1}(r)$. Thus, condition 2 is also fulfilled.

\item
Let  $U_l=\cup_{a \in V_{T,i}} V_{a,l}$ for $l \in [L]$, and let $U=\cup_{l \in [L]} U_l$. Recall that $U^j_1,U^j_2.\dots,U^j_{L_j}$ are the vertex sets of the connected components of $G_c|_{U^j}$ for any $j \in [k]$ (by the inductive hypothesis). But viewing from the bigger graph $H^j$, different $U^j_{\beta}$ and $U^j_{\beta'}$ might become connected via some vertex in $c^{-1}(r)$. Now each $B^j_t$ is the vertex set of a connected component of $H^j$, so it is the union of some subset of $c^{-1}(r)$ and some $U^j_{\beta}$'s (for different $\beta$'s). Then, by construction,  we have:
\bit
\item
For any $l$, $U_l=B^j_t$ for some $j$ and $t$. So the vertices in $U_l$ are connected in $G_c|_U$;
\item
For any $l \neq l'$, $U_l$ and $U_{l'}$ are disjoint;
\item
For any $l \neq l'$, $U_l$ and $U_{l'}$ do not share any vertex in $c^{-1}(r)$. This is because:
\bit
\item For any $i \neq i'$, $A_i$ and $A_{i'}$ do not share any vertex in $c^{-1}(r)$, because $A_i$ and $A_j$ are the vertex sets of different connected components of $H$. It follows that, for any $i \neq i'$, $t \in E_{i,j}$ and $t' \in E_{i',j'}$, $B^j_t$ and $B^{j'}_{t'}$ do not share any vertex in $c^{-1}(r)$, since $B^j_t \subseteq A_i$ and $B^{j'}_{t'} \subseteq A_{i'}$;
\item Also, for any $t \neq t'$, $B^j_t$ and $B^j_{t'}$ do not share any vertex in $c^{-1}(r)$, since $B^j_t$ and $B^j_{t'}$ are the vertex sets of different connected components of $H^j$;
\item Now, for $l \neq l'$, we have $U_l \cap c^{-1}(r)=B^j_t \cap c^{-1}(r)$ for some $j$ and $t$, and $U_{l'} \cap c^{-1}(r)=B^{j'}_{t'} \cap c^{-1}(r)$ for some $j'$ and $t'$. There are two possible cases: (1) either $t \in E_{i,j}$ and $t' \in E_{i',j'}$ where $i \neq i'$; (2) or $t, t' \in E_{i,j}$ and $t \neq t'$. Either way, the above facts imply that $U_l$ and $U_{l'}$ do not share any vertex in $c^{-1}(r)$.
\eit
\item For any $(j,\beta) \neq (j', \beta')$, there is no edge between  $U^j_{\beta}$ and $U^{j'}_{\beta'}$ in $G_c|_U$;
\item The above facts imply that $U_l$ and $U_{l'}$ are disconnected in $G_c|_{U}$ for any $l \neq l'$.
\eit
Combining these facts, we know that $U_1,U_2,\dots,U_L$ are the vertex sets of the connected components of $G_c|_U$. So condition 5 is fulfilled.

\item
Let $Z^j = Y^{d_j}_c \setminus U^j$, for $j \in [k]$. Note that $Z^j \cap U=\varnothing$. Then, by the inductive hypothesis, for any $w \in W^j$, $(G_c|_{Z^j}, w)$ is good with respect to $T^{d_j}$. 

Now consider any $w \in W$. Since $c^{-1}(r) = \bigcup_{i \in [m]} (A_i \cap c^{-1}(r))$, there exists $i \in [m]$ such that $w \in A_i$. Then by construction, we must have $(A_i \cap c^{-1}(r)) \subseteq W$, and hence $A_i \cap V_{r,l} = \varnothing$ for any $l$, and hence $A_i \cap U=\varnothing$.

Pick any $w^j \in N_{r \to d_j}(A_i \cap c^{-1}(r)) \cap W^j$, for $j \in [k]$. Since $(G_c|_{Z^j}, w^j)$ is good with respect to $T^{d_j}$, $G_c|_{Z^j}$ contains a subgraph $\bar{G}_j$ such that $\bar{G}_j$ can be contracted into a tree $\bar{T}^j$ which is isomorphic to $T^{d_j}$ and the root of $\bar{T}^j$ contains $w^j$. Importantly, the vertices in $c^{-1}(r)$ and $U^j$ are not involved in such contractions (because they are not in $Z^j$). Do such contractions for each $j \in [k]$.

Meanwhile, since $A_i$ is a connected component of $H$, we can contract it into a single vertex. This contraction can be performed {\emph {simultaneously}} with the above contractions. The reason is that $A_i$ contains only some vertices in $c^{-1}(r)$ and $U^j$'s, which are not involved in any of the above contractions. Thus, even after the above contractions, we can still contract $A_i$ into a single vertex $v_i \deq A_i$ which contains $w$. Also, $v_i$ will be connected to $w^1,w^2,\dots,w^k$, which are the roots of $\bar{T}^1,\dots,\bar{T}^k$ which are isomorphic to $T^{d_1},T^{d_2},\dots,T^{d_k}$ respectively. Thus, the resulting graph contains a tree isomorphic to $T$.
Fig.\ref{fig:clm2transform} illustrates an example of such transformations.

Now since $A_i \cap U=\varnothing$ and $Z^j \cap U=\varnothing$ for any $j \in [k]$, the above transformation involves only some vertices in $Z=V_G \setminus U$. Hence, $(G_c|_Z,w)$ is good with respect to $T$. So condition 6 is also fulfilled.

\begin{figure}[htbp]
\center
\subfigure[]
{\includegraphics[scale=0.3]{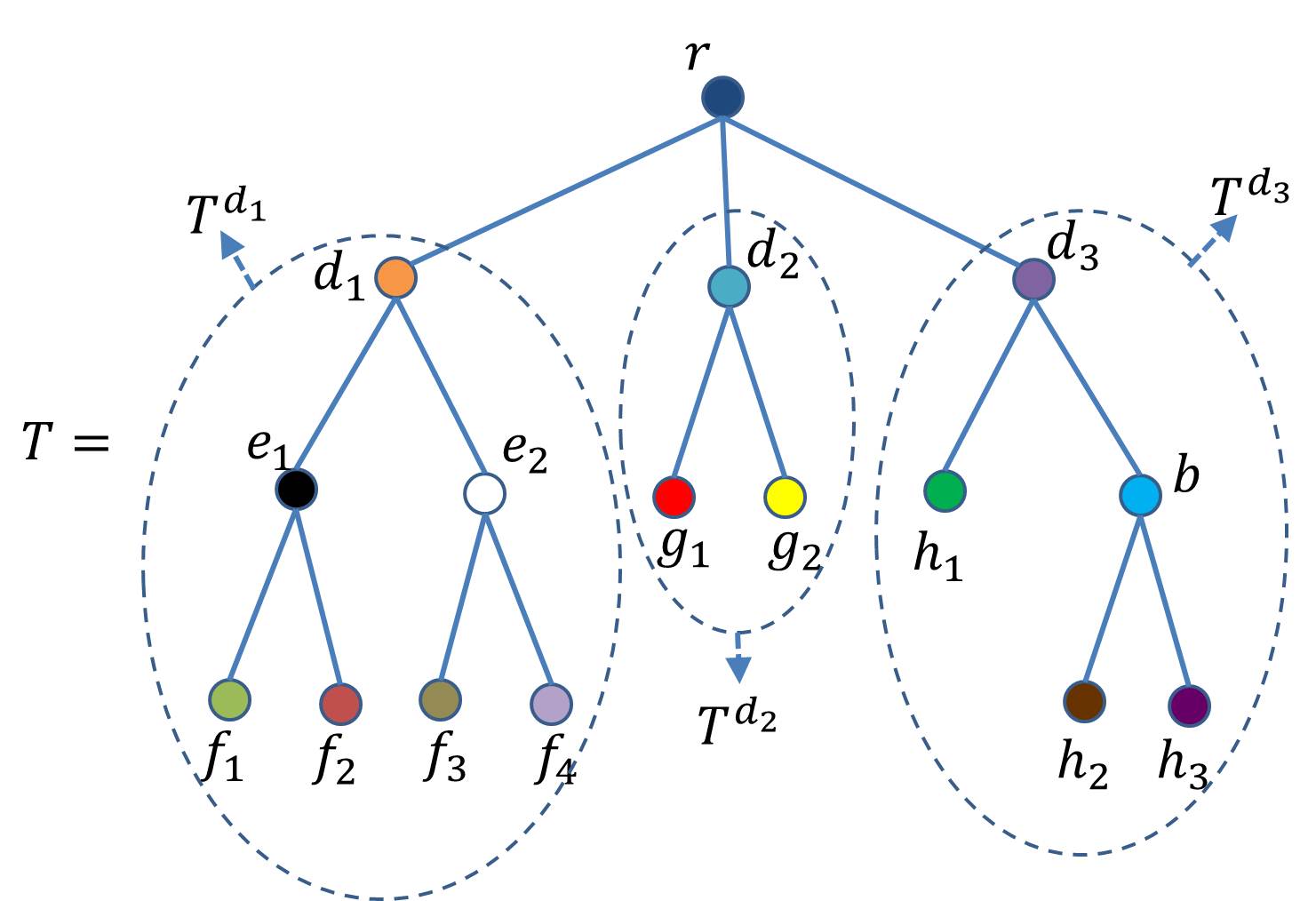}}
\subfigure[]
{\includegraphics[scale=0.4]{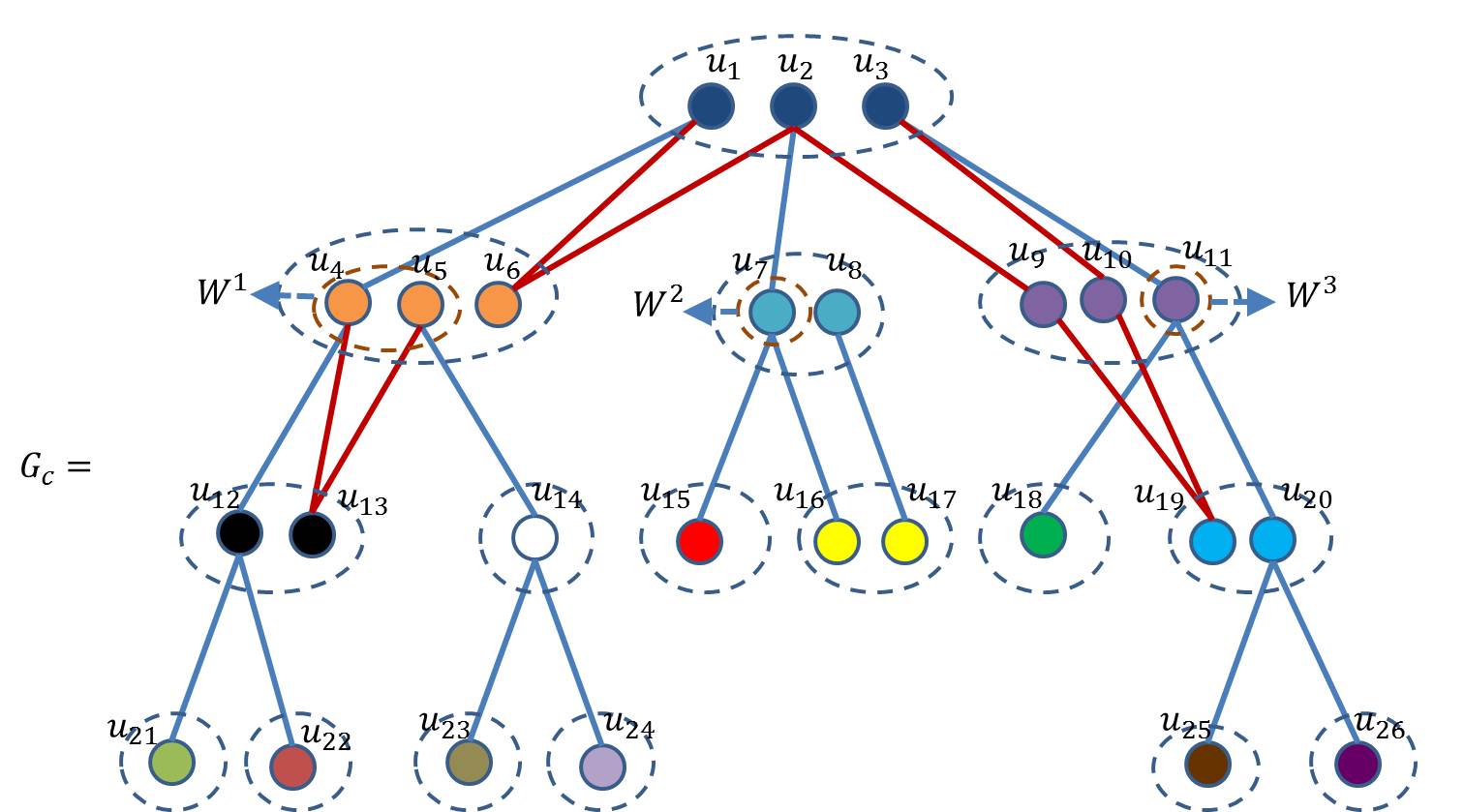}}
\subfigure[]
{\includegraphics[scale=0.4]{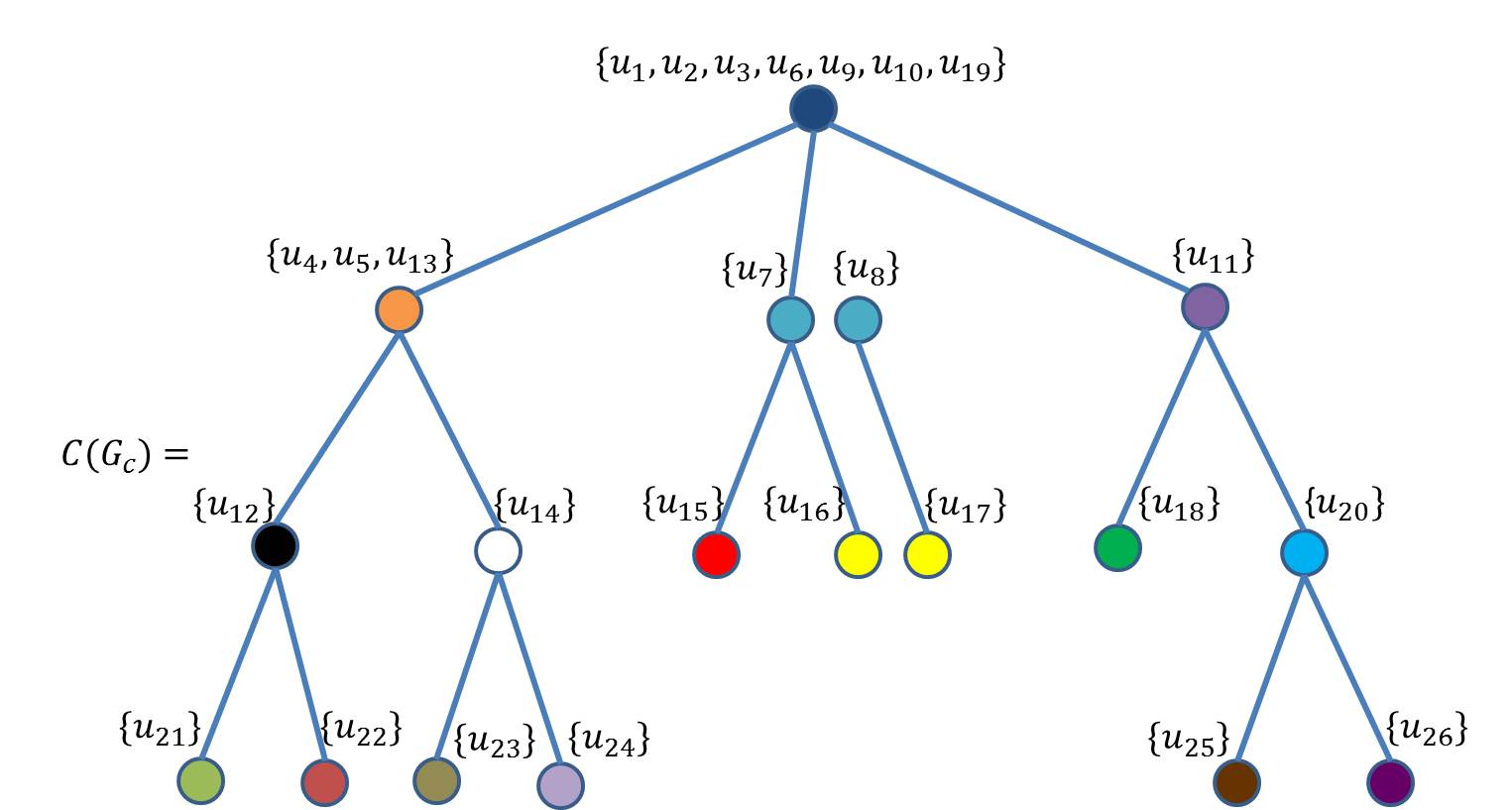}}
\caption{In this example, $T$'s root $r$ has $3$ children $d_1,d_2,d_3$.
For the given $G_c$, we have $W^1=\{u_4,u_5\}$, $U^1=\{u_6\}$,
$W^2=\{u_7\}$, $U^2=\{u^8\}$, $W^3=\{u_{11}\}$, $U^3=\{u_9,u_{10},u_{19}\}$,
$A_1=\{u_1,u_2,u_3,u_6,u_9,u_{10},u_{19}\}$ and $A_2=\{u_8\}$.
Note that $N_{r \to d_1}(A_1 \cap c^{-1}(r)) \cap W^1=\{u_4\}$,
$N_{r \to d_2}(A_1 \cap c^{-1}(r)) \cap W^2=\{u_7\}$ and $N_{r \to d_3}(A_1 \cap c^{-1}(r)) \cap W^3=\{u_{11}\}$.
By contracting the edges $(u_1,u_6)$, $(u_2,u_6)$, $(u_2,u_9)$, $(u_9,u_{19})$, $(u_{10},u_{19})$, $(u_3,u_{10})$, all the vertices in $A_1$ are combined together. We also contract the edges $(u_4,u_{13})$ and $(u_5,u_{13})$ to obtain a tree isomorphic to $T^{d_1}$ in the subgraph $G_c|_{Z^1}$. Let $\ecs$ be the set of the aforementioned edges.
Then, the resulting graph $\ecs(G_c)$ contains a tree isomorphic to $T$. Hence, $W=\{u_1,u_2,u_3\}$.}
\label{fig:clm2transform}
\end{figure}

\eit
\een

\end{proof}

\end{appendix}

\end{document}